\documentclass[12pt]{article}
\usepackage[right=1in,left=1in,top=1in,bottom=1in]{geometry}
\usepackage{hyperref}
\hypersetup{colorlinks, citecolor=blue, filecolor=blue, linkcolor=blue, urlcolor=blue}
\usepackage{graphicx}
\usepackage{url}
\usepackage[round]{natbib}
\usepackage{amsmath,amsthm}
\usepackage{engord}
\usepackage{float}
\usepackage{subfig}
\usepackage{pdflscape}
\usepackage{booktabs}
\usepackage{amssymb}
\usepackage{multirow}
\usepackage{xr}
\usepackage{xcolor}

\newtheorem{theorem}{Theorem}

\newtheorem{definition}{Definition}
\newtheorem{assumption}{Assumption}
\newtheorem{proposition}{Proposition}
\newtheorem{lemma}{Lemma}
\newtheorem{corollary}{Corollary}
\theoremstyle{definition}
\newtheorem{remark}{Remark}

\DeclareMathOperator{\Var}{Var}
\DeclareMathOperator{\Cov}{Cov}
\newcommand{\E}{\mathbb{E}}

\usepackage{setspace}
\renewcommand{\arraystretch}{1.5}

\usepackage{sectsty}
\sectionfont{\large}
\subsectionfont{\normalsize}
\subsubsectionfont{\normalsize}

\title{\vspace*{-2.5cm} \hspace*{-0.5cm}Identification and Estimation of Optimal Continuous Treatment Effects}

\author{Fangzhou Yu\thanks{School of Economics, University of Sydney. \href{mailto:fangzhou.yu@sydney.edu.au}{fangzhou.yu@sydney.edu.au}}}

\date{\vspace*{0.5cm}July, 2026}

\begin{document}

\bgroup
\let\footnoterule\relax

\begin{singlespace}
  \maketitle

  % ===== Long abstract (previous version) — commented out in favor of the short version below =====
  % \begin{abstract}
  %   Estimating causal effects of continuous treatments is challenging because traditional methods rely on unstable estimation of the score of the conditional treatment density. A recent paper by \citet{hines2023optimally} bypasses this instability by treating a bounded outcome weight, rather than a derivative weight, as the analytical primitive, thereby characterizing a class of weighted average derivative effects without density estimation. In this paper, we develop the identification and estimation theory for the optimally efficient estimands of this class under homoskedasticity and heteroskedasticity. On identification, we show that these estimands relax standard conditions, permitting selection on the basis of unobserved outcome variance, and remaining valid at sharp boundaries and at interior treatment deserts that the strict overlap and density-smoothness conditions of classical theory rule out. On estimation, we derive the efficient influence function, and develop Debiased Machine Learning estimators. Under homoskedasticity, the optimally efficient estimand coincides with the projection coefficient in a Partially Linear Regression; under heteroskedasticity, the optimal estimands center the treatment around a precision-weighted propensity score, generalizing the overlap-weighted effect of categorical treatments. We illustrate the proposed estimators through an application examining the effect of winning a lottery prize on labor supply.
  % \end{abstract}

  \begin{abstract}
    Estimating continuous treatment effects is hard because average-derivative estimators rely on an ill-posed conditional-density score. Recent work makes a bounded outcome weight the primitive, characterizing a class of weighted average derivative effects without density estimation. In this paper, we develop the identification and estimation theory for the optimally efficient estimands of this class under homoskedasticity and heteroskedasticity. On identification, we show that these estimands relax standard conditions, remaining valid at sharp boundaries and at interior treatment deserts that the strict overlap and density-smoothness conditions of classical theory rule out. On estimation, we derive the efficient influence function, and develop Debiased Machine Learning estimators.
  \end{abstract}

  \medskip
  \noindent\textbf{Keywords:} Continuous treatment effects; Weighted average derivative; Efficient influence function.

  \noindent\textbf{JEL Classification:} C13, C14, C21.

\end{singlespace}
\thispagestyle{empty}

\clearpage
\egroup
\setcounter{page}{1}

\section{Introduction}\label{sec:intro}

Estimating the causal effects of continuous treatments or exposures, such as drug dosage, the intensity of a policy intervention, or exposure to economic shocks, is a fundamental challenge in empirical economics. While the dose-response curve, \(d \mapsto \E[Y_i(d)]\), provides a complete characterization of the treatment effect of dose \(d\), nonparametrically estimating this entire function is a statistically demanding task typically characterized by slow, nonparametric convergence rates. In many empirical settings, researchers require a scalar summary of the marginal effect that can be estimated at the parametric root-\(n\) rate.

For continuous treatments, a natural scalar summary is a weighted Average Derivative Effect (ADE), taking the form \(\tau_w = \E[w(D_i, X_i)\mu'(D_i, X_i)]\), where \(D_i\) is the treatment and \(X_i\) is the set of covariates, \(\mu(d, x) = \E[Y_i \mid D_i=d, X_i=x]\) is the conditional expectation function, \(\mu'\) is its partial derivative with respect to \(d\), and \(w(d, x)\) is a user-specified weighting function defining the target population. However, the efficient estimation of the ADE also presents statistical challenges. From a functional analysis perspective, the target parameter \(\tau_w\) evaluates a differential operator applied to the unknown regression function \(\mu\). Consequently, the mapping \(\mu \mapsto \E[w \mu']\) is generally an unbounded linear functional, and thus, one cannot directly invoke the Riesz representation theorem in \(L_2(P_{D, X})\) to guarantee the existence of a stable, square-integrable representer.

Instead, traditional semiparametric estimation must construct a representation explicitly. This is typically achieved via integration by parts, which shifts the unbounded derivative operator from the unknown conditional mean \(\mu\) to the conditional density of the treatment \(f(d|x)\) \citep{powell1989semiparametric,newey1993efficiency}. This operation yields a weighting function on the observed outcome, \(\alpha_w(d, x) = -\partial_d w(d, x) - w(d, x)\partial_d \log f(d|x)\), such that \(\tau_w = \E[\alpha_w(D_i, X_i)Y_i]\). This method, while specifying the target population \(w\) and subsequently deriving its representer \(\alpha_w\), exposes a theoretical vulnerability. The explicitly constructed \(\alpha_w\) inherently relies on the score of the conditional density, \(\partial_d \log f(d|x)\). Nonparametrically estimating the derivative of a conditional density is an ill-posed inverse problem. It amplifies finite-sample noise, requires strong smoothness assumptions, and suffers from severe instability when the density is near zero \citep[e.g.,][]{newey1994large, cattaneo2010robust}.

This instability mirrors a well-known vulnerability in the categorical treatment literature, where standard Average Treatment Effect (ATE) estimators rely on inverse probability weights that blow up when propensity scores approach the boundaries. To resolve the limited overlap problem, researchers developed the ``balancing weights'' framework \citep{crump2006moving, li2018balancing,li2019propensity}. By shifting the estimand to prioritize subpopulations with substantial covariate overlap, this framework directly parameterizes weights that minimize the asymptotic variance of the estimator, thereby identifying inherently stable estimands and bypassing extreme propensity weights.

Recent work has extended covariate balancing principles to continuous treatments. For example, \citet{fong2018covariate} and \citet{ai2021unified} propose unified frameworks that directly estimate continuous stabilized weights, \(f(d)/f(d|x)\), by forcing covariates and treatments to be uncorrelated. While these methods bypass maximum likelihood density estimation, they are designed to recover causal dose-response levels \(\E[Y(d)]\). When researchers require a summary of the marginal effect, these level-based IPW frameworks generally restrict analysis to parameterized structural models (e.g., linear regressions) or discrete contrasts (e.g., \(\E[Y(d') - Y(d)]\)), instead of a fully nonparametric summary of the derivative effects.

This instability can be overcome by reversing the traditional analytical paradigm. Rather than specifying a derivative weight \(w\) and employing an unbounded differential operator to derive an unstable representer \(\alpha_w\), one defines the target estimand by directly specifying a bounded, square-integrable outcome weight \(\alpha \in L_2(P_{D,X})\). By restricting the analytical primitive to a bounded \(\alpha\) from the outset, the mapping \(\mu \mapsto \E[\alpha \mu]\) is a bounded, continuous linear functional by design, and the estimand is freed from any reliance on the conditional density score.

This outcome-weight primitive was recently formalized by \citet{hines2023optimally}. They develop a class \(\mathcal{R}\) of square-integrable functions satisfying \(\E[\alpha D]=1\) and \(\E[\alpha \mid X]=0\), and derive its optimally efficient members. Two features of their analysis motivate ours. First, on the analytical side, they treat the derivative mapping \(\mu \mapsto \E[w\mu']\) as a bounded linear functional, while we show that this differential operator is in general unbounded in the \(L_2(P_{D,X})\) topology, so that specifying \(\alpha\) as the primitive is not merely an algorithmic convenience but a necessity for a well-posed functional. Second, while they derive the general heteroskedastic optimal estimand, its Riesz representer depends on the conditional outcome variance. Judging this dependence to obscure the interpretation of the target population, they recommend two simpler least-squares estimands in the literature for applied use and do not develop an estimators for the optimal heteroskedastic estimands. Their identification, moreover, maintains full conditional independence between the potential outcomes and the treatment, together with density-smoothness conditions standard in the continuous treatment effect literature.

For the bounded functional \(\E[\alpha(D_i, X_i) Y_i]\) to identify a causally interpretable average derivative free from confounding bias, \(\alpha\) must be orthogonal to the pre-treatment covariates, i.e., \(\E[\alpha(D_i, X_i) \mid X_i] = 0\). This orthogonality condition serves as the continuous treatment generalization of the covariate balance required in the categorical treatment literature \citep{li2018balancing}. By treating \(\alpha\) as the primitive, the implied derivative weight \(w_\alpha(d, x)\) describes the aggregation scheme of the conditional causal derivatives, connecting directly to an active econometrics literature that evaluates the implicit treatment-effect weights of regression estimators \citep[e.g.,][]{blandhol2022tsls, goldsmith2022contamination,borusyak2024negative}.

We make two contributions that target the gaps left by this framework. Our first contribution is to develop the identification of the optimal estimands and to show that it systematically relaxes the standard conditions. Identification requires only conditional mean independence, which permits economic agents to sort into treatments on the basis of unobserved outcome variance (for example, risk-averse firms choosing lower investment because high investment carries volatile returns). This makes a general heteroskedastic estimator economically relevant. We further show that identification extends to settings with sharp economic thresholds (a zero lower bound on investment, bunching at a maximum tax bracket) and interior treatment deserts (dose regions with no observed treatments), both of which violate the density-positivity and density-smoothness conditions of classical average derivative theory. To make these relaxations precise, we express the estimand through a cumulative weight that never divides by the conditional density, and we read the optimal weights as an efficiency-driven reweighting toward the high signal-to-noise regions of the dose-response curve.

Our second contribution is to solve the general heteroskedastic estimation problem that \citet{hines2023optimally} derived but declined to estimate. We derive the Efficient Influence Function (EIF) of the optimal estimand, which features a debiased precision weight that tracks the squared outcome residuals so as to eliminate the first order bias of estimating the conditional variance. Because debiasing the precision weight injects higher-order terms into the Neyman-orthogonal remainder, the standard Cauchy--Schwarz bilinear argument no longer delivers \(\sqrt{n}\)-consistency. We provide two sets of conditions with the uniform bounds and envelope conditions required to control these remainders and restore \(\sqrt{n}\) inference. As a by-product, the analysis shows that under homoskedasticity, the optimal estimand coincides with the population projection coefficient of a Partially Linear Regression (PLR) \citep{robinson1988root}, supplying an efficiency-based causal interpretation of that widely used coefficient even under heterogeneous effects.

The remainder of the paper is organized as follows. Section~\ref{sec:wade} formulates balancing weights for continuous treatments and establishes the non-negativity guarantee. Section~\ref{sec:optim} derives the optimal balancing weights and develops the relaxed causal identification of the corresponding estimands. Section~\ref{sec:est} derives the efficient influence functions and establishes the higher-order DML theory for estimation. Section~\ref{sec:simulation} reports a Monte Carlo study, and Section~\ref{sec:app} illustrates the estimators through an application to the effect of lottery winnings on labor supply. Section~\ref{sec:conclusion} concludes. All proofs are collected in the Appendix.

\section{Balancing Weights for Continuous Treatments}\label{sec:wade}

Suppose we observe an independent and identically distributed sample of \(n\) observations \(O_i = (Y_i, D_i, X_i)\) drawn from an unknown joint distribution \(P\). Here, \(Y_i \in \mathbb{R}\) is the observed outcome, \(D_i \in \mathcal{D} \subseteq \mathbb{R}\) is a continuous treatment or exposure, and \(X_i \in \mathbb{R}^p\) is a vector of pre-treatment covariates. We adopt the potential outcomes framework, letting \(Y_i(d)\) denote the potential outcome for unit \(i\) under treatment level \(d \in \mathcal{D}\).

While the dose-response curve \(d \mapsto \E[Y_i(d)]\) provides a full characterization of the treatment effect, estimating this entire function nonparametrically is a statistically demanding task that suffers from the curse of dimensionality. In many empirical settings, researchers require a single scalar summary of the marginal effect estimable at the parametric \(\sqrt{n}\) rate.

To formalize a causal target parameter, we define the weighted average causal derivative as the limit of an expected incremental policy shift
\[
  \dot{\tau}_w = \lim_{\nu \to 0} \E\left[ w(D_i, X_i) \frac{Y_i(D_i + \nu) - Y_i(D_i)}{\nu} \right],
\]
where \(w(d,x)\) is a weighting function. The causal parameter \(\dot{\tau}_w\) characterizes the effect of increasing the treatment level by \(\nu\) on the expected potential outcome \citep[e.g.,][]{kreif2015evaluation}.

\subsection{Identification and the Statistical Target}\label{sec:iden}

To connect the causal parameter \(\dot{\tau}_w\) to the observable data distribution \(P\), let \(\mu(d,x) = \E[Y_i \mid D_i=d, X_i=x]\) denote the observed conditional expectation function, with its partial derivative with respect to \(d\) denoted as \(\mu'(d, x) = \partial_d \mu(d, x)\). Let \(f(d|x)\) denote the conditional density of \(D_i\) given \(X_i = x\), and let \(m(d, x) = \E[Y_i(d) \mid X_i = x]\) denote the conditional expected potential outcome.

Under the standard causal identification conditions in the continuous setting, collected as Assumption \ref{ass:iden} in Appendix \ref{sec:appenda}\footnote{Note that our identification Assumption \ref{ass:iden} is weaker than those provided in \citet{hines2023optimally}. See discussions in Appendix \ref{sec:appenda}}, the weighted average causal derivative is identified by the average derivative of the observed conditional mean,
\[
  \dot{\tau}_w = \E[w(D_i, X_i)\mu'(D_i, X_i)],
\]
as recorded in Proposition \ref{prp:iden} of Appendix \ref{sec:appenda}. The corresponding result for a binary treatment \(D_i \in \{0, 1\}\) is \(\E[w(X_i)(Y_i(1) - Y_i(0))] = \E[w(X_i)(\mu(1, X_i) - \mu(0, X_i))]\), as developed in the balancing weights literature \citep{crump2006moving, li2018balancing}. We adopt these classical conditions only to fix the statistical target. Once attention is restricted to the optimal estimands, Section \ref{sec:id} shows that they can be substantially relaxed, in particular dispensing with the local overlap condition on \(f(d|x)\).

In light of this identification, we drop the dot notation for the remainder of the paper and focus on the statistical estimand \(\tau_w \equiv \E[w(D_i, X_i)\mu'(D_i, X_i)]\), keeping in mind its structural justification as the continuous limit of a causal shift intervention.

\subsection{The Instability of Average Derivatives}\label{sec:instability}

Let \(\mathcal{H} = L_2(P_{D,X})\) denote the Hilbert space of all measurable functions \(g: \mathcal{D} \times \mathbb{R}^p \to \mathbb{R}\) that are square-integrable with respect to the joint distribution of \((D_i, X_i)\). The parameter \(\tau_w\) evaluates a linear functional of the conditional response surface
\[
  T_w(\mu) = \E[w(D_i, X_i)\mu'(D_i, X_i)].
\]
In modern semiparametric theory, achieving \(\sqrt{n}\)-consistent estimation of a linear functional fundamentally relies on finding its Riesz representer, a function \(\alpha_w \in \mathcal{H}\) such that \(T_w(\mu) = \langle \alpha_w, \mu \rangle = \E[\alpha_w(D_i, X_i)\mu(D_i, X_i)]\). By the law of iterated expectations, the existence of such an \(\alpha_w\) allows the estimand to be expressed directly as a weighted average of the observed outcome \(\tau_w = \E[\alpha_w(D_i, X_i)Y_i]\).

However, a mathematical obstacle arises. The functional \(T_w\) applies a differential operator, \(\mu \mapsto \mu'\). In the standard \(L_2(P_{D,X})\) topology, differential operators are strictly unbounded linear functionals. \citet{hines2023optimally} establish boundedness of this mapping under explicit regularity conditions on \(w\) and the data-generating process (their Conditions 1--3), and thereby invoke the Riesz Representation Theorem; absent such conditions, however, the differential operator is in general unbounded, and a unique, square-integrable representer is not guaranteed. Specifying \(\alpha\) as the primitive secures boundedness without imposing those conditions on the unknown data-generating process.

Because the Riesz Representation Theorem does not apply, traditional econometric approaches construct a representation explicitly. Under specific boundary and density smoothness assumptions (Assumption \ref{ass:powell} in Appendix \ref{sec:appenda}), one can apply integration by parts to shift the derivative operator from the unknown outcome function \(\mu\) to the conditional treatment density \(f(d|x)\) \citep{powell1989semiparametric},
\begin{equation}
  \label{eq:rr}
  \alpha_w(d,x) = -\frac{\partial w(d,x)}{\partial d} - w(d,x)\frac{\partial \log f(d|x)}{\partial d}.
\end{equation}
If the weighting scheme \(w\) and the underlying data-generating process are sufficiently regular to ensure this explicitly constructed \(\alpha_w\) has finite variance (\(\alpha_w \in \mathcal{H}\)), the functional is bounded ex post, and \(\alpha_w\) serves algebraically as its representer.

Equation \eqref{eq:rr} illustrates the theoretical vulnerability of estimating continuous treatment effects. Because the original functional is unbounded, deriving its representer forces a structural reliance on the score of the conditional density, \(\partial_d \log f(d|x)\). Consequently, the traditional analytical method, which fixes the derivative weight \(w\) as the primitive and subsequently maps \(w \mapsto \alpha_w\), demands restrictive smoothness conditions on the unobservable data-generating process. Even when these conditions hold, nonparametrically estimating the derivative of a conditional density is an ill-posed inverse problem. It amplifies finite-sample noise and causes estimators to become unstable when \(f(d|x)\) is close to zero, acting as a continuous analogue to the limited overlap problem in binary treatments \citep[e.g.,][]{cattaneo2010robust}.

\subsection{Generalized Balancing Weights}\label{sec:bw}

The instability inherent in Equation \eqref{eq:rr} arises because the standard approach specifies the target population via the derivative weight \(w\), and then employs an unbounded differential operator to derive the representer \(\alpha_w\). We propose to invert this analytical sequence. Rather than specifying \(w\), we define the statistical estimand by directly parameterizing an outcome weight \(\alpha \in \mathcal{H}\) as our analytical primitive.

By restricting our primitive \(\alpha\) to bounded elements in \(\mathcal{H}\) from the beginning, the resulting linear functional \(T_\alpha(\mu) = \E[\alpha(D_i, X_i)\mu(D_i, X_i)]\) is guaranteed to be bounded and continuous by the Cauchy-Schwarz inequality \(|T_\alpha(\mu)| \le \|\alpha\|_{\mathcal{H}}\|\mu\|_{\mathcal{H}}\). Under this inverted formulation, \(\alpha\) serves as the Riesz representer by design.

However, for this bounded functional \(\E[\alpha(D_i, X_i)Y_i]\) to identify a causally meaningful average derivative effect, \(\alpha\) cannot be an arbitrary function in \(L_2(P_{D,X})\). It must satisfy econometric constraints to purge confounding bias and correctly scale the treatment variation. We formalize these requirements in the following definition.
\begin{definition}[Continuous Balancing Weights]
  \label{def:class}
  A function \(\alpha \in L_2(P_{D,X})\) is defined as a continuous balancing weight if it satisfies two structural moment conditions: (i) Covariate Balance: \(\E[\alpha(D_i, X_i) \mid X_i] = 0\) almost surely; (ii) Normalization: \(\E[\alpha(D_i, X_i) D_i] = 1\).
\end{definition}

The two structural conditions in Definition \ref{def:class} coincide with the class \(\mathcal{R}\) of square-integrable weights studied by \citet{hines2023optimally}; our development emphasizes their econometric content and the causal interpretation of the resulting estimands. Condition (i) serves a dual purpose. Statistically, it ensures that the boundary terms vanish during integration by parts, isolating the pure derivative effect. Causally, it generalizes the concept of covariate balance established in the categorical treatment literature. For categorical treatments, balancing weights are defined by their ability to equalize the weighted covariate distributions across treatment regimes \citep{li2018balancing}. Condition (i) generalizes this distributional covariate balance to a continuum of treatment levels. Condition (ii) ensures the estimand correctly captures the magnitude of the marginal effect, which implies \(\E[w(D_i, X_i)] = 1\). We formally show these arguments in the proof of the following proposition.

Let \(F(d|x)\) denote the conditional cumulative distribution function (CDF) of \(D_i\) given \(X_i=x\).
\begin{proposition}
  \label{prp:class}
  For any continuous balancing weight in Definition \ref{def:class}, the statistical estimand \(\tau_\alpha = \E[\alpha(D_i, X_i)Y_i]\) identically recovers a normalized weighted average causal derivative, i.e., \(\E[w_\alpha(D_i, X_i)] = 1\), with the implied derivative weight uniquely defined as
  \begin{equation}
    \label{eq:weight}
    w_\alpha(d,x) = -\frac{F(d|x)}{f(d|x)} \cdot \E\big[\alpha(D_i, X_i) \mid D_i \le d, X_i = x\big].
  \end{equation}
\end{proposition}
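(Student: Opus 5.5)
The plan is to run the integration-by-parts argument in reverse, conditionally on \(X_i=x\), with the cumulative weight as the object that gets differentiated. Fix \(x\) and define
\[
  A(d,x) = \int_{-\infty}^{d} \alpha(t,x) f(t|x)\,dt = F(d|x)\,\E\big[\alpha(D_i,X_i)\mid D_i\le d, X_i=x\big].
\]
This is exactly \(-f(d|x)\,w_\alpha(d,x)\) by \eqref{eq:weight}. Covariate balance (i) gives \(A(d,x)\to 0\) at both ends of the support of \(D_i\) given \(X_i=x\): at the lower end trivially, and at the upper end because the limit is \(\E[\alpha\mid X_i=x]=0\). Also \(\partial_d A(d,x) = \alpha(d,x) f(d|x)\) almost everywhere, by Lebesgue differentiation, since \(\alpha f\in L_1\) follows from \(\alpha\in L_2\).

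Next I would write \(\tau_\alpha = \E[\alpha Y] = \E[\alpha\,\mu]\) by iterated expectations, and compute the inner integral by parts:
\[
  \int \alpha(d,x)\mu(d,x) f(d|x)\,dd = \big[A\mu\big]_{\text{bdry}} - \int A(d,x)\mu'(d,x)\,dd = \int w_\alpha(d,x)\mu'(d,x) f(d|x)\,dd .
\]
The boundary term vanishes by the previous step. Taking the outer expectation over \(X_i\) then yields \(\tau_\alpha = \E[w_\alpha\mu']\). Combined with Proposition \ref{prp:iden}, this identifies the weighted average causal derivative \(\dot\tau_{w_\alpha}\).

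Normalization follows by applying the same computation with \(\mu(d,x)=d\), so that \(\mu'=1\): we get \(1=\E[\alpha D]=\E[w_\alpha]\). Uniqueness would be argued as follows. If some \(w\) satisfies \(\E[w\mu']=\E[\alpha\mu]\) for all smooth \(\mu\), test with \(\mu(d,x)=g(x)\mathbf{1}\{d>c\}\), smoothed, or equivalently with \(\mu'=g(x)\phi(d)\). This forces \(\int_{-\infty}^{d} w f\,dt\) to be matched pointwise, so \(w f = -A\) a.e.; hence \(w\) is determined wherever \(f>0\), which is the form \eqref{eq:weight}.

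The main obstacle is justifying the boundary term \([A\mu]\to 0\) and the validity of integration by parts with only weak regularity: \(\mu(\cdot,x)\) must be absolutely continuous, and \(A\mu\) must vanish at infinite or unbounded endpoints. I would handle this by writing \(\mu(d,x)=\mu(d_0,x)+\int_{d_0}^{d}\mu'(t,x)\,dt\) and using Fubini. The constant term drops out since \(\E[\alpha\mid X]=0\), and swapping the order of integration gives \(\int\mu'(t,x)\big(\mathbf{1}\{t>d_0\}\int_t^\infty\alpha f - \mathbf{1}\{t<d_0\}\int_{-\infty}^t\alpha f\big)dt\). Using \(\int_t^\infty \alpha f=-A(t,x)\) reproduces \(-\int A\mu'\). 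This version needs only the integrability of \(A\mu'\), which is an assumption I would state explicitly (as part of Assumption \ref{ass:iden}), and it avoids any boundary limits or division by \(f\).
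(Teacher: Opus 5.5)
Your proposal follows the paper's proof. The paper's $H(d,x)=w_\alpha(d,x)f(d|x)=-\int_{\underline{d}}^{d}\alpha(t,x)f(t|x)\,\mathrm{d}t$ is your $-A(d,x)$, and the paper likewise integrates by parts conditionally on $X=x$, uses covariate balance to make the boundary term vanish at the upper endpoint, and obtains normalization by testing with $\mu(d,x)=d$.

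Your uniqueness argument (forcing $wf=-A$ a.e.\ via test functions with $\mu'=g(x)\phi(d)$) and your Fubini treatment of unbounded supports go beyond the paper. The paper asserts uniqueness without proof and, in a footnote, simply assumes $\mu H\to 0$ at infinite endpoints. One caveat: the Fubini swap itself needs absolute integrability of the double integrand, e.g.\ $\int_{t>d_0}|\mu'(t,x)|\int_t^{\infty}|\alpha(s,x)|f(s|x)\,\mathrm{d}s\,\mathrm{d}t<\infty$ and its lower-tail analogue. That is slightly more than the integrability of $A\mu'$ you state as sufficient.
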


The mapping in Proposition \ref{prp:class} coincides with Theorem 1 of \citet{hines2023optimally}. Their implied derivative weight reduces to the closed form \eqref{eq:weight} once the covariate balance constraint \(\E[\alpha \mid X_i]=0\) is imposed. We derive this form directly by integration by parts, without solving the differential equation for \(w\) on which their construction relies, and we retain it because it is the basis of the non-negativity characterization developed next.

Finally, a standard requirement for causal interpretation is that the implied derivative weights are non-negative. Non-negative weighting prevents the artificial ``sign reversal'' of local treatment effects, related to recent econometric discussion on design-based regressions \citep{blandhol2022tsls, goldsmith2022contamination, borusyak2024negative}. Proposition \ref{prp:class} allows us to characterize the necessary and sufficient condition on the balancing weight \(\alpha\) to guarantee non-negative \(w_\alpha(d,x)\).
\begin{lemma}
  \label{lem:nonneg}
  For a continuous balancing weight \(\alpha\) and its implied weight \(w_\alpha\) in Equation \eqref{eq:weight}, \(w_\alpha(d, x) \ge 0\) if and only if \(\E[\alpha(D_i, X_i) \mid D_i \le d, X_i = x] \le 0\) for all \(d\) and \(x\).
\end{lemma}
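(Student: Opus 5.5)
The plan is to read the sign of \(w_\alpha\) directly off the closed form in Proposition~\ref{prp:class}. By Equation~\eqref{eq:weight}, \(w_\alpha(d,x)\) is the product of the factor \(-F(d|x)/f(d|x)\) and the conditional mean \(\E[\alpha(D_i,X_i)\mid D_i\le d, X_i=x]\). I will show that, wherever \(w_\alpha\) is defined, the ratio \(F(d|x)/f(d|x)\) is strictly positive. The sign of \(w_\alpha(d,x)\) is then exactly the opposite of the sign of the truncated conditional mean, and both directions of the equivalence follow pointwise.

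The steps, in order, are as follows.

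\textbf{Step 1.} Identify the domain on which \(w_\alpha\) is defined. Since Equation~\eqref{eq:weight} divides by \(f(d|x)\) and conditions on \(\{D_i\le d\}\), \(w_\alpha\) is meaningful at points with \(f(d|x)>0\) and \(F(d|x)>0\). The quantifier ``for all \(d\) and \(x\)'' will be read over this set, up to \(P_{D,X}\)-null sets. At points with \(F(d|x)=0\), the conditional expectation is undefined. There I would adopt the natural convention that the product \(F(d|x)\,\E[\alpha\mid D_i\le d,X_i=x]\) is zero. By the proof of Proposition~\ref{prp:class}, this product equals the integral \(\int_{-\infty}^d \alpha(s,x) f(s|x)\,ds\), so \(w_\alpha=0\) there and both sides of the equivalence hold trivially.

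\textbf{Step 2.} On the support, \(F(d|x)>0\) and \(f(d|x)>0\), so \(-F(d|x)/f(d|x)<0\).

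\textbf{Step 3.} Conclude both directions. \(w_\alpha(d,x)\ge 0\) holds if and only if \(\E[\alpha\mid D_i\le d,X_i=x]\le 0\), since multiplying by a strictly negative scalar reverses the inequality. Doing this pointwise gives the ``if'' direction. For the ``only if'' direction, suppose the truncated mean is strictly positive at some point. Then \(w_\alpha\) is strictly negative at that point, which contradicts non-negativity.

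The main obstacle is not the algebra but stating the pointwise ``for all'' claim with the right qualification. The conditional expectation is defined only \(P\)-almost everywhere, and the factor \(F/f\) can degenerate at boundary points of the support. My first approach is to work with the cumulative representation
\[
  w_\alpha(d,x)\,f(d|x) = -\int_{-\infty}^{d}\alpha(s,x)f(s|x)\,ds .
\]
Its right-hand side is continuous in \(d\), so it gives a well-defined version. Non-negativity of \(w_\alpha\) on \(\{f>0\}\) is then equivalent to non-positivity of this running integral, which is exactly the stated condition after dividing by \(F(d|x)>0\).

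I would also remark that covariate balance, \(\E[\alpha\mid X_i]=0\), makes the running integral vanish as \(d\to\infty\). This is consistent with the condition: it means \(w_\alpha\) does not pick up a spurious sign at the upper boundary.
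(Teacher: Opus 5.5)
Your proposal is correct and follows the paper's proof, which likewise reads the sign of \(w_\alpha\) directly off the factor \(-F(d|x)/f(d|x)\) in Equation \eqref{eq:weight}. Your explicit restriction to \(\{f>0,\,F>0\}\), where that factor is strictly negative, is a useful refinement: the paper only states \(F \ge 0\) and \(f \ge 0\), whereas the ``only if'' direction needs strict positivity.
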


Intuitively, for the truncated expectation to be non-positive at the lower boundary of the treatment support, \(\alpha(d, x)\) must generally take negative values for small \(d\). Conversely, for the overall expectation \(\E[\alpha(D_i, X_i) \mid X_i]\) to return to exactly zero at the upper boundary, \(\alpha(d,x)\) must transition to positive values for large \(d\). Thus, Lemma \ref{lem:nonneg} enforces a structural requirement that a valid causal balancing weight \(\alpha(d,x)\) must transition from negative to positive as treatment intensity increases. While this condition provides an explicit diagnostic check for weight non-negativity, we treat the conditions in Definition \ref{def:class} as the sole restriction in Section \ref{sec:optim}, since the variance-minimizing solutions we develop inherently satisfy Lemma \ref{lem:nonneg}. As shown following Theorem \ref{thm:optimal_weights}, this holds for all of our optimal estimands, including the general heteroskedastic one. The guarantee is stronger than the sufficient condition of \citet{hines2023optimally}, who certify non-negative aggregation only when \(\alpha\) is monotonically increasing in \(d\). Because their optimal heteroskedastic weight, which is the same object as ours, need not be monotone, their condition fails to certify its non-negativity.

\section{Optimal Estimands and Relaxed Causal Identification}\label{sec:optim}

\subsection{Optimization Criterion and Regularity Conditions}\label{sec:crit}

The class of balancing weights is large, and Proposition \ref{prp:class} confirms that any continuous balancing weight identifies a normalized effect \(\tau_\alpha\). This raises the natural question: which balancing weight should be selected? Following the established approach in the balancing weights literature \citep{crump2006moving, li2019propensity}, we seek the balancing weight that is optimal in the sense of minimizing the nonparametric efficiency bound.

The efficiency bound characterizes the lowest possible asymptotic variance achievable by any regular asymptotically linear (RAL) estimator. When the derivative weight \(w(d,x)\) is known, the influence function of \(T_w\) has been derived by \citet{newey1993efficiency}:
\begin{equation}
  \label{eq:inf_func}
  \psi(O_i) = \underbrace{\alpha_w(D_i, X_i)\big(Y_i - \mu(D_i, X_i)\big)}_{\equiv \psi_A(O_i)} + \underbrace{w(D_i, X_i)\mu'(D_i, X_i) - \tau_w}_{\equiv \psi_B(O_i)}.
\end{equation}
Given the influence function and an efficient estimator \(\hat{\tau}_w\), its asymptotic variance is given by
\begin{equation}
  \label{eq:var_decomp}
  V_\psi = \E[\psi^2(O_i)] = \E[\psi_A^2(O_i)] + \E[\psi_B^2(O_i)],
\end{equation}
as \(\E[\psi_A(O_i)\psi_B(O_i)] = 0\), which is a consequence of the orthogonality between the residual term \(Y_i - \mu(D_i, X_i)\) and the conditional derivative effects. Here, \(\E[\psi_A^2(O_i)]\) represents the variance contribution from the noise in the outcome, amplified by the magnitude of the balancing weights \(\alpha_w\). \(\E[\psi_B^2(O_i)]\) is a measure of effect heterogeneity, representing the variance of the weighted individual derivative effects around the population average \(\tau_w\).

While minimizing the total variance \(V_\psi\) might seem intuitive, selecting \(w\) by minimizing the heterogeneity component \(\E[\psi_B^2(O_i)]\) is conceptually and practically problematic. Because \(\E[\psi_B^2(O_i)]\) is a function of the unknown population parameter \(\tau_w\), optimizing it leads to a circular definition. Furthermore, minimizing effect heterogeneity changes the objective from seeking statistical precision to contorting the targeted causal parameter based on the variation in treatment effects.

To optimize solely for statistical efficiency, we follow the strategy in \citet{crump2006moving} and redefine the objective as minimizing the asymptotic variance of the estimator relative to the sample analogue of \(\tau_w\), denoted \(\tau_{w,S}\). The asymptotic variance of \(\sqrt{n}(\hat{\tau}_w - \tau_{w,S})\) is the expectation of the square of the leading term \(\psi_A(O_i)\), which gives the conditional efficiency bound:
\begin{equation}
  \label{eq:vs}
  V_S = \E[\psi_A^2(O_i)] = \E[\alpha_w^2(D_i, X_i)\big(Y_i - \mu(D_i, X_i)\big)^2].
\end{equation}
Let \(\sigma^2(D_i, X_i) = \Var(Y_i \mid D_i, X_i)\) denote the conditional variance of the outcome. By the law of iterated expectations, the objective function in Equation \eqref{eq:vs} simplifies to \(V_S = \E[\alpha_w^2(D_i, X_i) \sigma^2(D_i, X_i)]\).

The solution to this functional optimization problem depends on the form of heteroskedasticity. To formalize a unified framework across all heteroskedastic architectures, we define the generalized precision weight as
\begin{equation}
  \label{eq:precision_weight}
  \omega(D_i, X_i) \equiv \sigma^{-2}(D_i, X_i).
\end{equation}
To ensure the optimization is well-defined and that the resulting optimal estimands possess square-integrable influence functions, we assume the following regularity conditions.

\begin{assumption}
  \label{ass:regularity_eif}
  \begin{itemize}
    \item[(a)] There exist strictly positive constants \(c_1, c_2\) such that \(c_1 \le \Var(Y_i \mid D_i, X_i) \le c_2\) almost surely. Consequently, the precision weight is bounded by \(1/c_2 \le \omega(D_i, X_i) \le 1/c_1\).
    \item[(b)] There exists \(c_3 > 0\) such that \(\E\big[\Var(D_i \mid X_i)\big] \ge c_3\).
    \item[(c)] The treatment and outcome have finite second moments, \(\E[D_i^2] < \infty\) and \(\E[Y_i^2] < \infty\).
  \end{itemize}
\end{assumption}

\begin{remark}
  Assumption \ref{ass:regularity_eif} plays a role distinct from the causal identification conditions developed in Section \ref{sec:id}. The identification conditions deliver the causal interpretation of the target, whereas Assumption \ref{ass:regularity_eif} is a separate regularity requirement that guarantees the optimization is well-posed and the optimal weights have finite variance. Assumption \ref{ass:regularity_eif}(a) bounds the precision weight away from zero and infinity, and Assumption \ref{ass:regularity_eif}(b) requires the precision-weighted conditional variance of the treatment to be bounded away from zero, so that the denominator of the optimal estimand does not degenerate. Assumption \ref{ass:regularity_eif}(c) supplies the finite second moments of \(D_i\) and \(Y_i\) under which the normalization \(\E[\alpha D]\) and the estimand \(\tau^*_\omega\) are well-defined. Together they ensure that the optimal balancing weight \(\alpha^*_\omega\) lies in \(L_2(P_{D,X})\). Additional moment conditions required for \(\sqrt{n}\)-consistent estimation are introduced separately in Theorem \ref{thm:dml_asymptotics}.
\end{remark}

\subsection{Optimal Continuous Balancing Weights}\label{sec:optbw}

We select the optimal balancing weight \(\alpha^*\) as the unique function that minimizes the conditional variance bound \(V_S\) subject to the balancing constraints in Definition \ref{def:class}.

By solving this constrained functional optimization problem in \(L_2(P_{D,X})\), we obtain a unified theorem that characterizes the optimal estimands across different heteroskedasticity architectures. To express the general solution, we define the precision-weighted propensity as
\begin{equation}
  \label{eq:precision_weighted_mean}
  e_\omega(x) = \frac{\E[\omega(D_i, X_i) D_i \mid X_i=x]}{\E[\omega(D_i, X_i) \mid X_i=x]}.
\end{equation}
Note that when the outcome variance is conditionally independent of the treatment, \(e_\omega(x)\) collapses to the standard propensity \(e(x) = \E[D_i \mid X_i=x]\).

\begin{theorem}[Optimal Continuous Balancing Weights]
  \label{thm:optimal_weights}
  Suppose Assumption \ref{ass:regularity_eif} holds. The optimal balancing weight \(\alpha^*\) that minimizes the efficiency bound \(V_S\), and its corresponding optimal estimand \(\tau^*\), are uniquely determined by the structure of the precision weight \(\omega\)
  \begin{align}
    \alpha^*_\omega(d,x) & = \frac{\omega(d,x)\big(d - e_\omega(x)\big)}{\E\big[\omega(D_i, X_i) \big(D_i - e_\omega(X_i)\big)^2\big]}, \label{eq:alpha_unified}                         \\
    \tau^*_\omega        & = \frac{\E\big[\omega(D_i, X_i) \big(D_i - e_\omega(X_i)\big) Y_i\big]}{\E\big[\omega(D_i, X_i) \big(D_i - e_\omega(X_i)\big)^2\big]}. \label{eq:tau_unified}
  \end{align}
  \begin{itemize}
    \item[(a)] General Heteroskedasticity: If \(\sigma^2(D_i, X_i)\) varies with both \(D_i\) and \(X_i\), the precision weight is \(\omega(d,x) = \sigma^{-2}(d,x)\).
    \item[(b)] Covariate-Dependent Heteroskedasticity: If the conditional variance depends only on pre-treatment covariates, \(\omega(d,x) = \sigma^{-2}(x)\). The precision-weighted propensity simplifies to \(e(x)\).
    \item[(c)] Homoskedasticity: If the outcome is homoskedastic, \(\omega(d,x) = 1\). The optimal balancing weight reduces to \(\alpha^*(d,x) = (d - e(x)) / \E[\Var(D_i \mid X_i)]\), yielding the optimal estimand \(\tau^* = \E[\Cov(D_i, Y_i \mid X_i)] / \E[\Var(D_i \mid X_i)]\).
  \end{itemize}
\end{theorem}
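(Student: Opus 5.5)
The plan is to treat the problem as a weighted least-norm problem in \(\mathcal{H}=L_2(P_{D,X})\). Minimize \(V_S=\E[\alpha^2\sigma^2]=\E[\alpha^2/\omega]\) over \(\alpha\in\mathcal{H}\), subject to the balance constraint \(\E[\alpha\mid X]=0\) and the normalization \(\E[\alpha D]=1\). By Assumption~\ref{ass:regularity_eif}(a), \(\omega\) is bounded above and below. So \(\langle g,h\rangle_\omega \equiv \E[g h/\omega]\) is an inner product whose norm is equivalent to the \(L_2(P_{D,X})\) norm, and \(\mathcal{H}\) with this inner product is again a Hilbert space.

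In this geometry the balance constraint says \(\alpha\) lies in the closed subspace
\[
\mathcal{M}=\{\alpha:\E[\alpha\,g(X)]=0 \text{ for all bounded } g\}.
\]
The normalization reads \(\langle \alpha, \omega D\rangle_\omega=1\). I will therefore project \(\omega D\) onto \(\mathcal{M}^{\perp_\omega}\), whose elements have the form \(\omega g(X)\) (since \(\langle \omega g, \alpha\rangle_\omega=\E[g\alpha]\)).

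The residual of this projection is \(\omega(D-h(X))\), with \(h\) determined by the requirement that the residual lie in \(\mathcal{M}\). That requirement is \(\E[\omega(D-h(X))\mid X]=0\), i.e. \(h=e_\omega\) as defined in \eqref{eq:precision_weighted_mean}. So \(r\equiv\omega(D-e_\omega(X))\in\mathcal{M}\), and for every feasible \(\alpha\),
\[
1=\E[\alpha D]=\E[\alpha(D-e_\omega(X))]=\langle \alpha, r\rangle_\omega ,
\]
where the middle equality uses balance.

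Cauchy--Schwarz in \(\langle\cdot,\cdot\rangle_\omega\) then gives \(V_S=\|\alpha\|_\omega^2\ge 1/\|r\|_\omega^2\). Here \(\|r\|_\omega^2=\E[\omega(D-e_\omega)^2]\). Equality holds iff \(\alpha\) is a multiple of \(r\), and the normalization fixes that multiple, yielding \eqref{eq:alpha_unified} uniquely (up to a.s. equality; strict convexity of the norm gives uniqueness directly). Formula \eqref{eq:tau_unified} follows from \(\tau^*=\E[\alpha^* Y]\).

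The main technical step is well-posedness: \(e_\omega\) must be well defined, \(\alpha^*\in\mathcal{H}\), and the denominator must be strictly positive.
\begin{itemize}
\item Finiteness of \(e_\omega\) and \(\alpha^*\in L_2\) follow from \(\E[D^2]<\infty\) together with the bounds on \(\omega\), using Jensen to get \(e_\omega(X)^2\lesssim \E[D^2\mid X]\).
\item For positivity, \(e_\omega(X)\) minimizes \(c\mapsto\E[\omega(D-c)^2\mid X]\). Hence
\[
\E[\omega(D-e_\omega)^2]\ge c_2^{-1}\,\E[(D-e_\omega)^2]\ge c_2^{-1}\,\E[\Var(D\mid X)]\ge c_3/c_2>0 ,
\]
where the middle inequality holds because the conditional mean minimizes the unweighted conditional squared error.
\item One should also check that the conditional-expectation manipulations are legitimate, e.g. \(\E[\alpha e_\omega(X)]=0\) requires \(e_\omega(X)\in L_2\), which the first bullet supplies.
\end{itemize}

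The special cases then follow by substitution.
\begin{itemize}
\item For (a), \(\omega=\sigma^{-2}\) is the definition.
\item For (b), \(\omega=\sigma^{-2}(x)\) factors out of both conditional expectations in \(e_\omega\), so \(e_\omega=e\).
\item For (c), with \(\omega\equiv1\) the denominator is \(\E[(D-e(X))^2]=\E[\Var(D\mid X)]\). The numerator is \(\E[(D-e(X))Y]=\E[\Cov(D,Y\mid X)]\) by iterated expectations, since \(\E[(D-e(X))\E[Y\mid X]]=0\).
\end{itemize}
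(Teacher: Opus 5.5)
Your proof is correct, but it takes a different route from the paper's. The paper sets up a Lagrangian with a scalar multiplier \(\lambda\) for the normalization and a functional multiplier \(\eta(X)\in L_2(P_X)\) for the balance constraint. Setting the G\^{a}teaux derivative to zero gives the first-order condition \(\alpha=\omega(\lambda D+\eta(X))\). It then solves the balance constraint for \(\eta=-\lambda e_\omega\) and the normalization for \(\lambda=1/\E[\omega(D-e_\omega)^2]\).

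That derivation is constructive, since \(e_\omega\) emerges from solving for the multiplier. However, it only locates a stationary point. It presupposes that an \(L_2\) multiplier for the conditional-moment constraint exists. It also leaves unstated that this point is the global minimizer and is unique; both hold by strict convexity of the objective over an affine feasible set. Nor does it show that the denominator is positive.

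Your argument replaces all of this with one inequality. For every feasible \(\alpha\), balance gives \(1=\E[\alpha D]=\langle\alpha,r\rangle_\omega\). Cauchy--Schwarz then yields \(V_S\ge 1/\E[\omega(D-e_\omega)^2]\), with equality exactly when \(\alpha\propto r\). Global optimality, almost-sure uniqueness, and the attained value of the bound therefore follow at once, with no infinite-dimensional multiplier theory. The price is that you need the candidate \(r\) in hand, which your projection of \(\omega D\) supplies.

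Your well-posedness checks also fill a small gap in the paper. Assumption~\ref{ass:regularity_eif}(b) is stated for the unweighted \(\E[\Var(D\mid X)]\), and the paper's Remark only asserts the weighted non-degeneracy. Your chain \(\E[\omega(D-e_\omega)^2]\ge c_2^{-1}\E[(D-e_\omega)^2]\ge c_3/c_2\) is what actually proves it. Likewise, your bound \(e_\omega(X)^2\lesssim\E[D^2\mid X]\) is what guarantees \(\alpha^*\in L_2(P_{D,X})\) and justifies \(\E[\alpha\, e_\omega(X)]=0\).
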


Theorem \ref{thm:optimal_weights} recovers, in unified precision-weight notation, the variance-minimizing representers characterized by \citet{hines2023optimally}. Case (a) is their Theorem 2, and case (c) coincides with their Corollary 1, the partially linear estimand \(\Psi\). Case (b) specializes Theorem 2 to covariate-dependent heteroskedasticity within the globally normalized class of Definition \ref{def:class}, and is distinct from their within-stratum-optimal estimand \(\psi\), as detailed in Section \ref{sec:challenge}.

Theorem \ref{thm:optimal_weights} provides a characterization of efficiency in continuous treatment settings, yielding three key insights. First, under general heteroskedasticity, the structural noise level fluctuates along the continuous treatment path itself. In this scenario, orthogonalizing against the standard propensity \(e(X_i)\) is no longer theoretically optimal. Instead, efficiency dictates that the treatment distribution must be centered around \(e_\omega(X_i)\), which serves as the optimal predictor of \(D_i\) under a precision-weighted inner product. This localized centering guarantees that the estimand heavily weights the specific segments of the dose-response curve that exhibit the highest signal-to-noise ratio.

This continuous optimization framework unifies established optimal weights from the discrete treatment literature. Consider the binary treatment setting where \(D_i \in \{0, 1\}\). Let \(e(X_i) = \mathbb{P}(D_i=1 \mid X_i)\), \(\sigma^2_1(X_i) = \Var(Y_i \mid D_i=1, X_i)\), and \(\sigma^2_0(X_i) = \Var(Y_i \mid D_i=0, X_i)\). The precision-weighted propensity becomes
\[
  e_\omega(X_i) = \frac{e(X_i)/\sigma^2_1(X_i)}{e(X_i)/\sigma^2_1(X_i) + (1-e(X_i))/\sigma^2_0(X_i)}.
\]
The target population weight implied by the optimal continuous balancing weight for treated units is \(w^*(X_i) = \alpha^*_\omega(1, X_i)e(X_i)\). Simplification of Theorem \ref{thm:optimal_weights}(a) under this discrete support reveals
\[
  w^*(X_i) \propto \left( \frac{\sigma^2_1(X_i)}{e(X_i)} + \frac{\sigma^2_0(X_i)}{1-e(X_i)} \right)^{-1}.
\]
This equivalence recovers the optimal heteroskedastic overlap weights established in \citet{crump2006moving}.

Second, when the noise varies across covariates but not across doses, the optimal estimand is analogous to Feasible Generalized Least Squares (FGLS). It inversely weights subpopulations by their outcome variance \(\sigma^2(X_i)\), giving more weight to covariate strata with more precisely measured outcomes.

Third, under homoskedasticity, the optimal continuous estimand \(\tau^*\) is algebraically equivalent to the population projection parameter from a Partially Linear Regression (PLR) \citep{robinson1988root}. In empirical practice, the PLR coefficient is often justified as a causal parameter only under the restrictive assumption of constant treatment effects. This equivalence establishes an efficiency-based causal interpretation for the PLR parameter even when treatment effects are heterogeneous: it identifies the weighted average derivative effect that minimizes the nonparametric efficiency bound. Furthermore, adapting the optimization criterion under homoskedasticity to minimize the sum of the asymptotic variances for all pairwise comparisons recovers the Generalized Overlap Weights \citep{li2019propensity} (See Appendix \ref{sec:appendb}).

\begin{remark}
  All three of our optimal estimands satisfy the non-negativity property by Lemma \ref{lem:nonneg}. From Theorem \ref{thm:optimal_weights}, \(\alpha_\omega^*(d,x)\) is proportional to \(\omega(d,x)(d - e_\omega(x))\) scaled by a positive denominator. We evaluate the sign of the unscaled truncated expectation \(I(d,x) = \E[\omega(D_i,X_i)(D_i - e_\omega(x))\mathbb{I}(D_i \le d) \mid X_i=x]\) in two regions. For \(d \le e_\omega(x)\), the term \((D_i - e_\omega(x))\) is non-positive almost surely on the event \(\{D_i \le d\}\). Since the precision weight is strictly positive \(\omega(D_i,X_i) > 0\), it follows that \(I(d,x) \le 0\). For \(d > e_\omega(x)\), we utilize the covariate balance constraint. By the definition of \(e_\omega(x)\), the full conditional expectation is zero, meaning \(\E[\omega(D_i,X_i)(D_i - e_\omega(x)) \mid X_i=x] = 0\). We can therefore rewrite the truncated expectation as the negation of the upper tail \(I(d,x) = -\E[\omega(D_i,X_i)(D_i - e_\omega(x))\mathbb{I}(D_i > d) \mid X_i=x]\). On the event \(\{D_i > d\}\), we have \(D_i > d > e_\omega(x)\), making the term inside the expectation positive. Its negation is therefore negative, yielding \(I(d,x) < 0\). Therefore, \(\E[\alpha_\omega^*(D_i,X_i) \mid D_i \le d, X_i=x] \le 0\) for all \(d\) and \(x\).
\end{remark}

\subsection{Relaxed Causal Identification}\label{sec:id}

Theorem \ref{thm:optimal_weights} characterizes the optimal estimands as functionals of the observed distribution. We now ask under what conditions \(\tau^*_\omega = \E[\alpha^*_\omega(D_i, X_i) Y_i]\) identifies a weighted average causal derivative. The classical conditions of Section \ref{sec:iden} suffice, but they are stronger than necessary; for the optimal estimands, identification holds under a markedly weaker set of conditions, and we isolate exactly which classical requirements can be dropped.

The key device is a cumulative reformulation of the implied weight that never divides by the conditional density. For a balancing weight \(\alpha\), define the cumulative weight
\begin{equation}
  \label{eq:cumulative}
  W_\alpha(d, x) = -\E\big[\alpha(D_i, X_i)\,\mathbb{I}(D_i \le d) \mid X_i = x\big].
\end{equation}
This is the same object as \(H(d,x) = w_\alpha(d,x) f(d|x)\) that arises in the proof of Proposition \ref{prp:class}. Writing it directly as a conditional expectation, rather than as the density-weighted derivative weight \(w_\alpha = W_\alpha / f\), makes two properties transparent. First, \(W_\alpha(\cdot, x)\) is well-defined and of bounded variation for almost every \(x\) whenever \(\alpha(\cdot, x) \in L_1(P_{D|X})\), which \(\alpha \in L_2(P_{D,X})\) supplies; it never divides by \(f(d|x)\) and remains finite where the conditional density vanishes or has atoms. Second, by the covariate balance constraint \(\E[\alpha \mid X]=0\), it vanishes at both ends of the support: \(W_\alpha \to 0\) at the lower limit trivially, and at the upper limit because \(W_\alpha(\bar{d}, x) = -\E[\alpha \mid X=x] = 0\).

We collect the minimal conditions under which the optimal estimands retain a causal interpretation.
\begin{assumption}[Causal Identification of the Optimal Estimands]
  \label{ass:id_min}
  For a continuous balancing weight \(\alpha\) in Definition \ref{def:class}:
  \begin{enumerate}
    \item[(a)] Consistency: \(D_i = d\) implies \(Y_i = Y_i(d)\).
    \item[(b)] Mean independence: \(\E[Y_i(d) \mid D_i, X_i] = \E[Y_i(d) \mid X_i] \equiv m(d, X_i)\) for all \(d \in \mathcal{D}\).
    \item[(c)] Absolute continuity: for almost every \(x\), \(m(\cdot, x)\) is absolutely continuous on the convex hull \([\underline{d}(x), \overline{d}(x)]\) of the support of \(D_i \mid X_i = x\); in particular it is continuous with finite limits at the boundary.
    \item[(d)] Dominated derivative: there is a measurable envelope \(L(d,x)\) with \(|\partial_d m(d,x)| \le L(d,x)\) and \(\E\big[\int |W_\alpha(d, X_i)|\, L(d, X_i)\, \mathrm{d}d\big] < \infty\), where \(\partial_d m\) is the almost-everywhere derivative supplied by (c).
  \end{enumerate}
\end{assumption}

Relative to the classical Assumption \ref{ass:iden}, three requirements are weakened or removed. Mean independence in (b) is retained but, as discussed below, is weaker than the full conditional independence assumed in the related literature. The smoothness requirement is relaxed from continuous differentiability to absolute continuity in (c); continuity alone secures the integration by parts, while absolute continuity is what licenses writing the resulting Stieltjes integral against the almost-everywhere derivative. Most importantly, the local overlap condition that \(f(d|x)\) be continuous and strictly positive in a neighborhood of every observed point is dropped entirely, because \(W_\alpha\) is defined through the conditional law and never divides by \(f\). Condition (d) is the integrability requirement that the classical Assumption \ref{ass:iden}(e) supplied in a different form. It guarantees the derivative representation below is finite and enables the interchange of integration. We retain the regularity Assumption \ref{ass:regularity_eif}, as it guarantees \(\alpha^*_\omega \in L_2(P_{D,X})\), so that the optimal weight from Theorem \ref{thm:optimal_weights} exists, and it implies the non-degeneracy \(\E[\omega(D_i - e_\omega(X_i))^2] > 0\) of the estimand's denominator.
\begin{proposition}[Self-Regularizing Identification]
  \label{prp:id_min}
  Suppose Assumptions \ref{ass:regularity_eif} and \ref{ass:id_min} hold for a continuous balancing weight \(\alpha\). Then \(\alpha\), and in particular each optimal weight \(\alpha^*_\omega\) of Theorem \ref{thm:optimal_weights}, identifies a weighted average causal derivative,
  \begin{equation}
    \label{eq:id_repr}
    \tau_\alpha = \E[\alpha(D_i, X_i) Y_i] = \E\left[ \int_{\underline{d}(X_i)}^{\overline{d}(X_i)} \frac{\partial m(d, X_i)}{\partial d}\, W_\alpha(d, X_i)\, \mathrm{d}d \right],
  \end{equation}
  where \([\underline{d}(x), \overline{d}(x)]\) is the convex hull of the support of \(D_i \mid X_i = x\). For the optimal weights, \(W_{\alpha^*_\omega} \ge 0\) by Lemma \ref{lem:nonneg}.
\end{proposition}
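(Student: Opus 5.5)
The proof proceeds in two stages: first reduce $\tau_\alpha$ to an integral of $m$ against the conditional law, then perform a Stieltjes integration by parts against the cumulative weight $W_\alpha$.

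\textbf{Reduction to $m$.} By consistency (Assumption \ref{ass:id_min}(a)) and iterated expectations,
\[
\tau_\alpha=\E[\alpha(D_i,X_i)\,\E[Y_i\mid D_i,X_i]].
\]
Mean independence (b) gives $\E[Y_i\mid D_i=d,X_i]=\E[Y_i(d)\mid D_i=d,X_i]=m(d,X_i)$. Here I would note that $\E[Y_i(d)\mid D_i,X_i]$ evaluated on $\{D_i=d\}$ is justified by taking a version of (b) that is jointly measurable in $(d,D_i,X_i)$. Hence $\tau_\alpha=\E[\alpha(D_i,X_i)m(D_i,X_i)]$, and conditioning on $X_i=x$ reduces the problem to computing $\int m(d,x)\,\alpha(d,x)\,P_{D|X=x}(\mathrm{d}d)$ for almost every $x$. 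Integrability of $\alpha m$ is not assumed directly. I would obtain it from the representation itself: once the integration by parts is established on truncated intervals, (d) makes the limit finite. Alternatively, one can use $\E[Y^2]<\infty$ and $\alpha\in L_2$ from Assumption \ref{ass:regularity_eif}, since $\E|\alpha Y|<\infty$ and $\E[\alpha Y]=\E[\alpha\mu]$.

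\textbf{Integration by parts.} Fix $x$. The signed measure $\nu_x(\mathrm{d}d)=\alpha(d,x)P_{D|X=x}(\mathrm{d}d)$ has distribution function $-W_\alpha(d,x)$, which is of bounded variation. Because $m(\cdot,x)$ is absolutely continuous on $[\underline d,\overline d]$, write
\[
m(d,x)=m(\underline d,x)+\int_{\underline d}^{d}\partial_s m(s,x)\,\mathrm{d}s.
\]
Substituting and applying Fubini gives
\[
\int m\,\mathrm{d}\nu_x=m(\underline d,x)\,\nu_x(\mathbb R)+\int_{\underline d}^{\overline d}\partial_s m(s,x)\,\nu_x\big((s,\overline d]\big)\,\mathrm{d}s .
\]
The first term vanishes because $\nu_x(\mathbb R)=\E[\alpha\mid X=x]=0$ by covariate balance. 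For the second, $\nu_x((s,\overline d])=-\nu_x((-\infty,s])=W_\alpha(s,x)$, again by balance. This is exactly the inner integral in \eqref{eq:id_repr}. Atoms of $P_{D|X}$ and gaps (treatment deserts) in its support cause no difficulty: on a gap, $W_\alpha$ is simply constant, and the representation never uses $f$.

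\textbf{Main obstacle.} The delicate step is justifying Fubini. This requires
\[
\int\!\!\int \mathbb I(s<d)\,|\partial_s m(s,x)|\,|\alpha(d,x)|\,P(\mathrm{d}d)\,\mathrm{d}s<\infty,
\]
which is not the same as $\int |W_\alpha|\,L\,\mathrm{d}s<\infty$, because $|W_\alpha|$ can be smaller than $\E[|\alpha|\mathbb I(D>s)\mid x]$. I would first try to avoid the issue by splitting $\alpha=\alpha^+-\alpha^-$ and applying Tonelli to each part. If that fails, I would instead truncate to $[a,b]\subset(\underline d,\overline d)$ and use the bounded-variation integration by parts formula for the product of an absolutely continuous function and a right-continuous BV function, which yields boundary terms $m(b)W_\alpha(b)-m(a)W_\alpha(a)$. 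These boundary terms tend to zero as $a\downarrow\underline d$ and $b\uparrow\overline d$, because $m$ has finite boundary limits by (c) and $W_\alpha\to0$ at both ends. The interior integral then converges by dominated convergence using (d). Finally, I would integrate over $x$, using (d) once more to license the outer expectation. The claim $W_{\alpha^*_\omega}\ge0$ follows directly from Lemma \ref{lem:nonneg} and the subsequent Remark, since $W_\alpha=-F(d|x)\,\E[\alpha\mid D\le d,X=x]$.
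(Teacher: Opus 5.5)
Your proposal is correct and follows essentially the paper's argument: you reduce $\tau_\alpha$ to $\E[\alpha m]$, integrate by parts against the cumulative weight $W_\alpha$, let covariate balance kill the boundary terms, and read off $W_{\alpha^*_\omega}\ge 0$ from Lemma~\ref{lem:nonneg}; your FTC-plus-Fubini computation is simply a hands-on proof of the Lebesgue--Stieltjes integration-by-parts formula that the paper invokes. The Fubini step you flag as the main obstacle is in fact automatic on a compact hull, because absolute continuity gives $\int_{\underline{d}}^{\overline{d}}|\partial_s m(s,x)|\,\mathrm{d}s<\infty$ and $\alpha(\cdot,x)\in L_1(P_{D\mid X=x})$, so your double integral is bounded by the product of these two finite quantities; and if you do use the truncation fallback with an atom at a finite $\underline{d}$, note that $W_\alpha(a,x)\to-\alpha(\underline{d},x)\,P(D=\underline{d}\mid X=x)\neq 0$ as $a\downarrow\underline{d}$, so this boundary term does not vanish and is offset only by the atom excluded from $(a,b]$.
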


The proof, in Appendix \ref{sec:appenda}, applies integration by parts to \(\E[\alpha Y] = \E[\alpha m]\). The boundary term \([-m\,W_\alpha]\) vanishes because \(W_\alpha \to 0\) at both endpoints by covariate balance and \(m\) is bounded there by Assumption \ref{ass:id_min}(c), and the form accommodates atoms in the treatment distribution without a density. On the observed support, \(\partial_d m\) aggregates local causal derivatives. Across an interior gap in the support, \(W_\alpha\) is constant, so the gap contributes \(m(\overline{a}, x) - m(\underline{a}, x)\) between its endpoints. The pointwise derivative enters \(\tau_\alpha\) only through this identified endpoint contrast, under the maintained continuity of \(m\). The framework thus does not eliminate extrapolation so much as relocate it from the conditional density, which classical methods must smooth and invert, to the outcome regression \(m\). Integrating the cumulative weight recovers the normalization, \(\int_{\underline{d}(x)}^{\overline{d}(x)} W_\alpha(d, x)\, \mathrm{d}d = \E[\alpha(D_i, X_i) D_i \mid X_i = x]\), so \(\E\big[\int W_\alpha\, \mathrm{d}d\big] = \E[\alpha D] = 1\); the representation \eqref{eq:id_repr} thus expresses \(\tau_\alpha\) as a normalized average of the local derivatives \(\partial_d m\), weighted non-negatively for each optimal \(\alpha^*_\omega\) by Lemma \ref{lem:nonneg}.

We note that there are two contrasts with \citet{hines2023optimally}. The first concerns selection on variance. \citet{hines2023optimally} and the classical average-derivative literature assume full conditional independence \(Y_i(d) \perp\!\!\!\perp D_i \mid X_i\); Assumption \ref{ass:id_min}(b) requires only conditional mean independence, which permits agents to sort into dosage on the basis of unobserved outcome variance, while leaving the conditional mean unconfounded. Admitting such selection is what makes a precision-weighted estimand economically meaningful rather than a purely statistical refinement.

The second concerns boundary robustness. The vulnerability of the classical approach is that it makes the boundary term vanish through the conditional density itself: the integration by parts underlying the unweighted average derivative carries the boundary term \([\mu(d,x)\,f(d|x)]\), discarded only if \(f(d|x)\) decays at the edges of the support \citep{powell1989semiparametric}. Openness of the support does not secure this. A treatment that is uniform, \(D_i \mid X_i \sim \mathcal{U}(0, c(X_i))\), has open support \((0, c(x))\) yet a density that limits to \(1/c(x) \neq 0\) at both ends, so the boundary term does not vanish and the average derivative is not identified without further restrictions on the weight. Covariate balance removes this dependence on the density's tail behavior. The boundary term is \([-m(d,x)\,W_\alpha(d,x)]\), and \(W_\alpha \to 0\) at both endpoints (trivially at the lower limit, and at the upper limit because \(W_\alpha(\overline{d}, x) = -\E[\alpha(D_i, X_i) \mid X_i = x] = 0\) by Definition \ref{def:class}), so it vanishes regardless of the value of \(f\). To make the boundary term vanish, \citet{hines2023optimally} instead impose two regularity conditions on the treatment density, that its support is an open interval, and that the density-weighted weight \(w(d,x)\,f(d|x)\) is differentiable on it. The cumulative weight \(W_\alpha\) requires neither. Defined directly from the conditional distribution of \(D\), it is automatically of bounded variation, which is all the integration by parts needs. It admits closed boundaries and boundary mass points and shifts the smoothness requirement from the treatment density onto the outcome regression \(m\). Specifying \(\alpha\) as the primitive is in this sense well matched to economic applications with sharp boundaries and mass points in the treatment.

\subsection{Efficiency-Driven Regularization}\label{sec:reg}

The optimal weight \(\alpha^*_\omega(d,x) \propto \omega(d,x)(d - e_\omega(x))\) reweights the dose-response curve toward regions with a high signal-to-noise ratio. This reweighting is driven by the efficiency criterion of Theorem \ref{thm:optimal_weights}, and here, we describe how it specializes to the three estimands.

\emph{Homoskedastic estimand}. Under \(\omega = 1\), the implied weight is proportional to \(\Var(D_i \mid X_i)\), covariate strata in which the treatment is more variable receive more weight. This is the continuous analogue of the variance weighting that partial regression places on covariate cells, and at the binary boundary, where \(\Var(D_i \mid X_i) = e(X_i)(1 - e(X_i))\), it reduces exactly to the categorical overlap weight \citep{crump2006moving, li2018balancing}. A covariate stratum with a degenerate treatment, \(\Var(D_i \mid X_i) = 0\), receives exactly zero weight, so the estimand resolves limited-overlap strata without ad hoc trimming, while strata with small but positive treatment variance are down-weighted.

\emph{Covariate-dependent estimand}. Under \(\omega = \sigma^{-2}(x)\), the estimand additionally reweights covariate strata by the inverse of their outcome variance, exactly as Feasible Generalized Least Squares does, giving more weight to strata with more precisely measured outcomes. Under Assumption \ref{ass:regularity_eif}(a) the precision weight is bounded, so this reweighting is bounded and shifts weight toward higher-precision subpopulations.

\emph{General heteroskedastic estimand}. Under \(\omega = \sigma^{-2}(d,x)\), the inverse-variance weighting operates within the dose-response curve, shifting weight toward dose regions where the outcome is measured with less noise. Under Assumption \ref{ass:regularity_eif}(a) this too is a bounded reweighting. It is tempting to read this as protection against confounding in the tails, where extreme doses are often driven by idiosyncratic shocks. Suppose that mean independence is violated, and let \(b(d,x) = \mu(d,x) - m(d,x)\) denote the resulting confounding gap. The estimand then decomposes as \(\tau^*_\omega = \E[\alpha^*_\omega m] + \E[\alpha^*_\omega b]\), the causal target plus a bias \(\E[\alpha^*_\omega(D_i,X_i) b(D_i,X_i)]\). Because each region enters this bias with weight \(\alpha^*_\omega \propto \omega(d,x)(d - e_\omega(x)) = \sigma^{-2}(d,x)(d - e_\omega(x))\), a confounding gap is down-weighted in proportion to the local outcome variance, so regions of high outcome noise, where extreme-dose confounding is most likely to reside, enter the bias with diminished weight. If one is willing to posit that confounding is collocated with noise, \(|b(d,x)| \le c\,\sigma^2(d,x)\), the bias is bounded by \(c\,\E[|D_i - e_\omega(X_i)|] / \E[\omega(D_i - e_\omega(X_i))^2]\).

\section{Semiparametric Estimation via DML}\label{sec:est}

\subsection{The Challenge of Heteroskedasticity}\label{sec:challenge}

The optimal estimands of Theorem \ref{thm:optimal_weights} minimize the efficiency bound, yet only the homoskedastic estimand is straightforward to estimate. Under \(\omega = 1\) the precision weight is constant and \(\mu\) cancels from the estimating equation, which collapses to the partially linear regression score; standard Debiased Machine Learning then delivers \(\sqrt{n}\)-consistency under the usual \(o_P(n^{-1/4})\) nuisance rates \citep{robinson1988root, chernozhukov2018double}. This is the regime to which the prior literature confined itself, recommending the partially linear and conditionally homoskedastic estimands for applied use.\footnote{The two estimands recommended by \citet{hines2023optimally} are the partially linear coefficient \(\Psi = \E[\Cov(D_i,Y_i \mid X_i)]/\E[\Var(D_i \mid X_i)]\), which coincides with our \(\tau^*_{\mathrm{homo}}\), and \(\psi = \E[\Cov(D_i,Y_i \mid X_i)/\Var(D_i \mid X_i)]\), which is optimal under conditional homoskedasticity \(\sigma^2(d,x)=\sigma^2(x)\) within the subclass of weights normalized inside each covariate stratum, \(\E[\alpha(D_i,X_i) D_i \mid X_i]=1\). The estimand \(\psi\) is distinct from our covariate-dependent estimand \(\tau^*_{\mathrm{cov}}\) of Theorem \ref{thm:optimal_weights}(b): although both pertain to the same variance structure \(\sigma^2(d,x)=\sigma^2(x)\), the representer of \(\psi\) is \((D_i-e(X_i))/\Var(D_i \mid X_i)\), which depends only on the treatment model \(P_{D \mid X}\) and requires no outcome-variance estimation, whereas \(\tau^*_{\mathrm{cov}}\) is optimal within the globally normalized class of Definition \ref{def:class} (\(\E[\alpha(D_i,X_i) D_i]=1\)), reweights covariate strata by \(\sigma^{-2}(X_i)\Var(D_i \mid X_i)\), and does require estimating \(\sigma^2(x)\). We work throughout in the globally normalized class and do not study the within-stratum-normalized subclass.} The general heteroskedastic estimand is the difficult case, and it is the one we solve. Its weight \(\omega(d,x) = \sigma^{-2}(d,x)\) depends on the conditional outcome variance, an additional nonparametric nuisance function; a naive plug-in that substitutes a machine-learning estimate \(\hat{\sigma}^2\) into the weights inherits its first-order regularization bias, which destroys parametric convergence. Neutralizing this bias requires an influence function orthogonal to the estimation of \(\sigma^2\), and the resulting score is no longer bilinear in the nuisance errors. The next two subsections derive that influence function and the higher-order machine-learning theory it demands.

\subsection{The Optimal Efficient Influence Function}\label{sec:eif}

Theorem \ref{thm:optimal_weights} characterizes the optimal continuous estimands assuming the structural components, such as \(\omega\), \(e_\omega\), and \(\mu\), are known. In practice, these nonparametric nuisance parameters must be estimated. To estimate the optimal estimands at the parametric \(\sqrt{n}\)-rate using flexible machine learning methods, modern semiparametric theory requires us to find the Efficient Influence Function (EIF) of the target parameter. The EIF automatically provides the Neyman orthogonal score necessary to immunize the estimator against the slow convergence rates of the nuisance parameters \citep{chernozhukov2018double}.

To analyze the EIF, we first construct a projection representation shared by all three optimal estimands. We define the precision-weighted conditional outcome mean \(\rho_\omega(x)\) as
\begin{equation}
  \label{eq:rho_omega}
  \rho_\omega(x) = \frac{\E[\omega(D_i, X_i) \mu(D_i, X_i) \mid X_i = x]}{\E[\omega(D_i, X_i) \mid X_i = x]}.
\end{equation}
Applying the law of iterated expectations, all three optimal estimands \(\tau^*_\omega\) defined in Theorem \ref{thm:optimal_weights} can be written as a projection
\begin{equation}
  \label{eq:unified_projection}
  \tau^*_\omega = \frac{\E\big[\omega(D_i, X_i)\big(D_i - e_\omega(X_i)\big)\big(\mu(D_i, X_i) - \rho_\omega(X_i)\big)\big]}{\E\big[\omega(D_i, X_i)\big(D_i - e_\omega(X_i)\big)^2\big]}.
\end{equation}

We now derive the EIF for \(\tau^*_\omega\), taking \(\omega = \sigma^{-2}(D_i, X_i)\) to be the \((D_i, X_i)\)-conditional precision that defines the general heteroskedastic estimand of Theorem \ref{thm:optimal_weights}(a); the covariate-dependent and homoskedastic estimands follow as specializations under their maintained variance architectures, as Remark \ref{rmk:eif_scope} details. Let \(\eta = (\omega, \mu, e_\omega, \rho_\omega)\) denote the vector of nuisance parameters. Because the precision weight \(\omega = \sigma^{-2}\) depends on the conditional variance of the outcome, estimating it introduces a specific adjustment into the influence function. To capture this analytically, we define the debiased precision weight as
\begin{equation}
  \label{eq:omega_db}
  \omega_{\mathrm{db}}(O_i; \eta) = 2\omega(D_i, X_i) - \omega^2(D_i, X_i)\big(Y_i - \mu(D_i, X_i)\big)^2.
\end{equation}
Notice that when evaluated at the true parameters \(\eta_0\), the conditional expectation satisfies \(\E[\omega_{\mathrm{db}}(O_i; \eta_0) \mid D_i, X_i] = 2\omega_0 - \omega_0^2 \sigma_0^2 = \omega_0\), where \(\sigma_0^2(D_i, X_i) = \E[(Y_i - \mu_0(D_i, X_i))^2 \mid D_i, X_i]\) is the \((D_i, X_i)\)-conditional variance and \(\omega_0 = \sigma_0^{-2}\). This calibration holds precisely because \(\omega_0\) is the \((D_i, X_i)\)-conditional precision: the conditional expectation of the debiased weight exactly recovers the true precision weight, while its tracking of the squared outcome residuals ensures Neyman orthogonality.

\begin{theorem}[Efficient Influence Function of Optimal Estimands]
  \label{thm:eif_optimal}
  Suppose Assumption \ref{ass:regularity_eif} holds, and let \(\eta_0\) denote the true nuisance parameters. The Efficient Influence Function (EIF) of \(\tau^*_\omega\) is given by \(\phi^*(O_i) = K_{\omega_0}^{-1}\psi_{\mathrm{aug}}(O_i; \tau^*_\omega, \eta_0)\), where \(K_{\omega_0} = \E[\omega_0(D_i, X_i)(D_i - e_{\omega,0}(X_i))^2]\), and the unscaled augmented score is \(\psi_{\mathrm{aug}}(O_i; \tau, \eta) = A_{\mathrm{aug}}(O_i; \eta) - \tau B_{\mathrm{aug}}(O_i; \eta)\), with components
  \begin{align}
    A_{\mathrm{aug}}(O_i; \eta) & = \omega_{\mathrm{db}}(O_i; \eta)\big(D_i - e_\omega(X_i)\big)\big(\mu(D_i, X_i) - \rho_\omega(X_i)\big) \nonumber \\
                                & \qquad \qquad + \omega(D_i, X_i)\big(D_i - e_\omega(X_i)\big)\big(Y_i - \mu(D_i, X_i)\big), \label{eq:score_A}     \\
    B_{\mathrm{aug}}(O_i; \eta) & = \omega_{\mathrm{db}}(O_i; \eta)\big(D_i - e_\omega(X_i)\big)^2. \label{eq:score_B}
  \end{align}
\end{theorem}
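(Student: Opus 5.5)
The plan is to compute the pathwise (Gateaux) derivative of \(\tau^*_\omega = N(P)/K(P)\) along a regular one-dimensional parametric submodel \(P_t\) with score \(s(O) = s_X(X) + s_{D|X}(D,X) + s_{Y|D,X}(O)\), and to show that \(\partial_t \tau^*_\omega(P_t)|_{t=0} = \E[\phi^*(O_i) s(O_i)]\). Here \(N = \E[\omega(D-e_\omega)(\mu-\rho_\omega)]\) is the numerator in \eqref{eq:unified_projection} and \(K = K_{\omega}\). Because the model for \(P\) is nonparametric, apart from the bounds in Assumption \ref{ass:regularity_eif}, which are open conditions and so do not restrict the tangent space, the tangent space is all of \(L_2^0(P)\). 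Any mean-zero representer of the pathwise derivative is therefore automatically the EIF. By the quotient rule the derivative is \(K^{-1}(\dot N - \tau^*_\omega \dot K)\). It therefore suffices to show that \(\dot N = \E[(A_{\mathrm{aug}} - N) s]\) and \(\dot K = \E[(B_{\mathrm{aug}} - K) s]\). Combining these gives \(K^{-1}\E[\psi_{\mathrm{aug}} s]\), since \(N - \tau^* K = 0\). This also verifies the centering \(\E[\psi_{\mathrm{aug}}]=0\).

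I will treat \(N\) and \(K\) through a chain rule over their nuisance components \((\omega,\mu,e_\omega,\rho_\omega)\) and the outer law of \((D,X)\).

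The first simplification comes from the perturbations of \(e_\omega\) and \(\rho_\omega\). These contribute nothing to first order, because \(\E[\omega(D-e_\omega)\mid X]=0\) and \(\E[\omega(\mu-\rho_\omega)\mid X]=0\) hold by the definitions \eqref{eq:precision_weighted_mean} and \eqref{eq:rho_omega}. Concretely, differentiating \(K\) in \(e_\omega\) gives \(-2\E[\omega(D-e_\omega)\dot e_\omega]=0\). Differentiating \(N\) in \(e_\omega\) and \(\rho_\omega\) gives \(-\E[\omega(\mu-\rho_\omega)\dot e_\omega] - \E[\omega(D-e_\omega)\dot\rho_\omega] = 0\). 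This is the projection structure that makes \(e_\omega,\rho_\omega\) enter only at second order.

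The remaining contributions come from three sources.

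First, the law of \((D,X)\), holding the functions fixed, yields \(\E[(\omega(D-e_\omega)(\mu-\rho_\omega) - N)(s_X+s_{D|X})]\) and the analogous term for \(K\).

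Second, \(\mu\) enters \(N\) directly. Using \(\dot\mu(d,x) = \E[(Y-\mu)s_{Y|D,X}\mid D=d,X=x]\), this produces \(\E[\omega(D-e_\omega)(Y-\mu)s]\), which is the second line of \eqref{eq:score_A}.

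Third, there is the perturbation of \(\omega = \sigma^{-2}\). Since \(\sigma^2 = \E[(Y-\mu)^2\mid D,X]\), we have \(\dot\sigma^2 = \E[((Y-\mu)^2-\sigma^2)s\mid D,X]\); the term from \(\dot\mu\) vanishes because \(\E[Y-\mu\mid D,X]=0\). Hence \(\dot\omega = -\omega^2\dot\sigma^2 = \E[(\omega - \omega^2(Y-\mu)^2)s\mid D,X]\). Multiplying by \((D-e_\omega)(\mu-\rho_\omega)\), respectively \((D-e_\omega)^2\), and adding the first-source term with weight \(\omega\), we get coefficient \(\omega + \omega - \omega^2(Y-\mu)^2 = \omega_{\mathrm{db}}\). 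This is exactly the first line of \eqref{eq:score_A} and the expression \eqref{eq:score_B}.

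A bookkeeping point: the first-source terms involve only \(s_X+s_{D|X}\), whereas the final representer must be paired with the full score \(s\). The gap is closed by the identity \(\E[(\omega_{\mathrm{db}}-\omega)\,g(D,X)\, s_{X}+s_{D|X}]=0\), which follows from the calibration \(\E[\omega_{\mathrm{db}}\mid D,X]=\omega\) noted before the theorem. Likewise \(\E[\omega g (Y-\mu)(s_X+s_{D|X})]=0\), and \(\E[(\omega g - N)s_{Y|D,X}]=0\). With these, every piece can be written as \(\E[\cdot\, s]\) against the full score.

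The main obstacle is the \(\omega\)-perturbation step. It must be done with \(\omega\) viewed as a functional of \(P\), namely \(1/\Var_P(Y\mid D,X)\), rather than as a fixed weight. The accounting must also be checked carefully, since \(\omega\) appears in \(e_\omega\) and \(\rho_\omega\) as well as in the outer expectation. The latter dependencies are harmless by the orthogonality argument above, but this needs to be stated explicitly.

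Regularity is supplied by Assumption \ref{ass:regularity_eif}. Boundedness of \(\omega\) guarantees that the interchange of differentiation and expectation is valid. The bound \(K\ge c_3/c_2 > 0\) keeps the quotient well defined. Finite second moments of \(Y\) and \(D\) are needed for \(\phi^*\) to be in \(L_2\); I would note that square integrability of the \(\omega^2(Y-\mu)^2\) term formally needs fourth moments, and I would flag this as an implicit integrability requirement.
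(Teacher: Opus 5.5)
Your proposal is correct and follows essentially the same route as the paper's proof: pathwise differentiation of $N/K$ via the quotient rule, the projection identities that kill the $e_\omega$ and $\rho_\omega$ perturbations, $\dot\mu = \E[(Y-\mu)S \mid D,X]$, and the chain rule $\dot\omega = -\omega^2\dot{\sigma}^2$, which combines with the outer-law term to produce $\omega_{\mathrm{db}}$. Your additions are sound refinements rather than a different argument: the explicit nonparametric tangent-space justification, the score-component bookkeeping, and the remark that $\phi^* \in L_2$ needs fourth-moment-type conditions beyond Assumption \ref{ass:regularity_eif}(c), a point the paper's ``all required moments are finite'' glosses over.
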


Theorem \ref{thm:eif_optimal} characterizes the statistical limit for estimating continuous treatment effects and highlights three theoretical properties. First, the EIF natively bypasses the ill-posed density derivative estimation (\(\partial_d \log f(d|x)\)). Second, the score is insensitive to the estimation of nuisance parameters \(e_\omega\) and \(\rho_\omega\) because the estimand is defined via an orthogonal projection. Finally, the appearance of the debiased weight \(\omega_{\mathrm{db}}\) corrects for the first-order bias induced by estimating the structural precision \(\omega(d,x)\).

\begin{remark}[Scope of the debiasing and variance-architecture robustness]
  \label{rmk:eif_scope}
  The debiased weight \(\omega_{\mathrm{db}}\) is constructed for the general heteroskedastic estimand, where \(\omega_0 = \sigma_0^{-2}(D_i, X_i)\) is the \((D_i, X_i)\)-conditional precision; the calibration \(\E[\omega_{\mathrm{db}} \mid D_i, X_i] = \omega_0\) holds because \(\omega_0\) is that conditional object. The other two estimands are specializations. Under homoskedasticity \(\omega \equiv 1\) is known and carries no first-order estimation error, so the debiasing is unnecessary and the score reduces to the partially linear regression score \eqref{eq:plr_score}. Under covariate-dependent heteroskedasticity the same calibration holds provided the maintained restriction \(\sigma_0^2(d,x) = \sigma_0^2(x)\) is correct. This yields a robustness distinction worth making explicit. Because the general implementation regresses the squared residuals on \((D_i, X_i)\), it recovers the correct conditional precision whatever the true variance architecture, and so consistently targets the optimal estimand even when the noise is in fact covariate-only or constant. A restricted implementation that models the variance as an \(X_i\)-measurable object \(\sigma_b^2(x)\), as \(\hat{\tau}^*_{\mathrm{cov}}\) does, is not architecture-agnostic: if the truth varies with the dose, its debiased weight calibrates not to \(\omega_b(x) = \sigma_b^{-2}(x)\) but to the pseudo-weight \(2\omega_b(x) - \omega_b^2(x)\sigma_0^2(d,x)\), which is not sign-definite and turns negative wherever \(\sigma_0^2(d,x) > 2\sigma_b^2(x)\). The estimator then targets a pseudo-parameter and the non-negativity of Lemma \ref{lem:nonneg} may fail. This reflects a misspecified variance architecture rather than a defect of the estimand, and it accounts for the finite-sample fragility of \(\hat{\tau}^*_{\mathrm{cov}}\) under the general-heteroskedastic design in Section \ref{sec:simulation}.
\end{remark}

\subsection{Debiased Machine Learning}\label{sec:dml}

The optimal continuous balancing weights derived in Theorem \ref{thm:optimal_weights} identify a class of target estimands \(\tau^*_\omega\) that minimize the nonparametric efficiency bound under various heteroskedastic scenarios. In practice, evaluating these estimands requires estimating the nonparametric nuisance functions \(\eta = (\omega, \mu, e_\omega, \rho_\omega)\). Replacing these unknown functions with flexible machine learning estimates in standard plug-in estimators generally fails to achieve parametric \(\sqrt{n}\)-consistency due to regularization bias and slow convergence rates. To overcome this, we exploit the EIF derived in Theorem \ref{thm:eif_optimal} and develop \(\sqrt{n}\)-consistent estimators based on the DML framework.

The estimation procedure is described as follows:
\begin{enumerate}
  \item Randomly partition the sample indices \(\{1, \dots, n\}\) into \(K\) folds, \(\mathcal{I}_1, \dots, \mathcal{I}_K\). Let \(\mathcal{I}_k^c\) denote the out-of-fold complement for fold \(k\).

  \item For each fold \(k \in \{1, \dots, K\}\), estimate the nuisance parameters.
        \begin{itemize}
          \item[2a.] Train machine learning models to estimate the conditional mean \(\hat{\mu}_k(d,x)\) by regressing \(Y_i\) on \((D_i, X_i)\). Then, estimate the conditional variance \(\hat{\sigma}^2_k(d,x)\) by regressing the squared residuals \((Y_i - \hat{\mu}_k(D_i, X_i))^2\) on \((D_i, X_i)\). Construct the precision weight as \(\hat{\omega}_k(d,x) = 1/\max\{\hat{\sigma}^2_k(d,x), c\}\) for some small trimming constant \(c>0\). Because the squared residuals are heavy-tailed, an over-flexible variance regression can drive \(\hat{\sigma}^2_k\) spuriously small in sparse regions and inflate the precision weight; this is most pronounced for the covariate-dependent estimand, whose variance regression conditions on \(X_i\) alone. We therefore select the leaf size of the random-forest variance learner by out-of-bag error, which adapts its smoothness to the data so the estimated weights do not develop a spurious heavy tail.
          \item[2b.] Estimate \(\hat{e}_{\omega,k}(x)\) and \(\hat{\rho}_{\omega,k}(x)\) by passing \(\hat{\omega}_k(D_i, X_i)\) to the ML models trained on \(\mathcal{I}_k^c\):
                \begin{align*}
                  \hat{e}_{\omega,k}    & = \arg\min_{f \in \mathcal{F}} \sum_{j \in \mathcal{I}_k^c} \hat{\omega}_k(D_j, X_j) \big(D_j - f(X_j)\big)^2, \\
                  \hat{\rho}_{\omega,k} & = \arg\min_{g \in \mathcal{G}} \sum_{j \in \mathcal{I}_k^c} \hat{\omega}_k(D_j, X_j) \big(Y_j - g(X_j)\big)^2.
                \end{align*}
        \end{itemize}

  \item Construct the final cross-fitted estimator \(\hat{\tau}^*_\omega\) by solving the empirical moment condition \(\frac{1}{n}\sum_{k=1}^K \sum_{i \in \mathcal{I}_k} \psi_{\mathrm{aug}}(O_i; \hat{\tau}^*_\omega, \hat{\eta}_k) = 0\), which yields:
\end{enumerate}
\begin{equation}
  \label{eq:dml_estimator}
  \hat{\tau}^*_\omega = \frac{\sum_{k=1}^K \sum_{i \in \mathcal{I}_k} A_{\mathrm{aug}}(O_i; \hat{\eta}_k)}{\sum_{k=1}^K \sum_{i \in \mathcal{I}_k} B_{\mathrm{aug}}(O_i; \hat{\eta}_k)}.
\end{equation}

\begin{theorem}[Asymptotic Normality]
  \label{thm:dml_asymptotics}
  Suppose the conditions in Assumption \ref{ass:regularity_eif} hold, and let \(\eta_0 = (\omega_0, \mu_0, e_{\omega,0}, \rho_{\omega,0})\) denote the true nuisance parameters, and let \(\hat{\eta}_k = (\hat{\omega}_k, \hat{\mu}_k, \hat{e}_{\omega,k}, \hat{\rho}_{\omega,k})\) denote their cross-fitted machine learning estimators. Assume that at least one of the following two conditions holds:

  \begin{itemize}
    \item Condition A:
          \begin{enumerate}
            \item[(i)] There exists \(C>0\) such that \(|Y_i| \le C\) and \(|D_i| \le C\) almost surely.
            \item[(ii)] \(\|\hat{\mu}_k\|_\infty\), \(\|\hat{e}_{\omega,k}\|_\infty\), \(\|\hat{\rho}_{\omega,k}\|_\infty \le C\).
            \item[(iii)] \(\|\hat{\mu}_k - \mu_0\|_{P,2}\), \(\|\hat{\omega}_k - \omega_0\|_{P,2}\), \(\|\hat{e}_{\omega,k} - e_{\omega,0}\|_{P,2}\), and \(\|\hat{\rho}_{\omega,k} - \rho_{\omega,0}\|_{P,2}\) are all \(o_P(n^{-1/4})\).
          \end{enumerate}
    \item Condition B:
          \begin{enumerate}
            \item[(i)] There exists \(q \ge 8\) such that \(\E[|Y_i|^q] < \infty\) and \(\E[|D_i|^q] < \infty\).
            \item[(ii)] \(\|\hat{\mu}_k - \mu_0\|_\infty\), \(\|\hat{\omega}_k - \omega_0\|_\infty\), \(\|\hat{e}_{\omega,k} - e_{\omega,0}\|_\infty\), and \(\|\hat{\rho}_{\omega,k} - \rho_{\omega,0}\|_\infty\) are all \(o_P(1)\).
            \item[(iii)] Let \(v(O_i) = 1 + |D_i - e_{\omega,0}(X_i)| + |\mu_0(D_i,X_i) - \rho_{\omega,0}(X_i)|\). \(\|(\hat{\mu}_k - \mu_0)v(O)\|_{P,2}\), \(\|(\hat{\omega}_k - \omega_0)v(O)\|_{P,2}\), \(\|\hat{e}_{\omega,k} - e_{\omega,0}\|_{P,2}\), and \(\|\hat{\rho}_{\omega,k} - \rho_{\omega,0}\|_{P,2}\) are all \(o_P(n^{-1/4})\).
          \end{enumerate}
  \end{itemize}

  Then, the estimator in Equation \eqref{eq:dml_estimator} is a Regular Asymptotically Linear (RAL) estimator of \(\tau^*_\omega\), i.e.,
  \[
    \sqrt{n}(\hat{\tau}^*_\omega - \tau^*_\omega) \xrightarrow{d} \mathcal{N}(0, V^*_\omega),
  \]
  where
  \[
    V^*_\omega = \frac{\E[\psi_{\mathrm{aug}}^2(O_i; \tau^*_\omega, \eta_0)]}{(\E[\omega_0(D_i, X_i)(D_i - e_{\omega,0}(X_i))^2])^2},
  \]
  which can be consistently estimated by plugging in the sample analogues.
\end{theorem}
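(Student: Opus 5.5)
We write the estimator as a ratio and linearize. Let \(\hat A = n^{-1}\sum_k\sum_{i\in\mathcal I_k} A_{\mathrm{aug}}(O_i;\hat\eta_k)\) and \(\hat B\) be defined analogously. Then
\[
\hat\tau^*_\omega-\tau^*_\omega=\frac{n^{-1}\sum_k\sum_{i\in\mathcal I_k}\psi_{\mathrm{aug}}(O_i;\tau^*_\omega,\hat\eta_k)}{\hat B}.
\]
Since \(\E[B_{\mathrm{aug}}(O;\eta_0)]=K_{\omega_0}>0\), by the calibration \(\E[\omega_{\mathrm{db}}\mid D,X]=\omega_0\) and Assumption \ref{ass:regularity_eif}(b), the argument has two parts. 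First we show \(\hat B\to_p K_{\omega_0}\). Second we show the numerator equals \(n^{-1}\sum_i\psi_{\mathrm{aug}}(O_i;\tau^*_\omega,\eta_0)+o_P(n^{-1/2})\). The CLT, Slutsky, and Theorem \ref{thm:eif_optimal} (identifying \(K_{\omega_0}^{-1}\psi_{\mathrm{aug}}\) as the EIF, hence regularity) then give the stated limit. We first verify that \(\E[\psi_{\mathrm{aug}}(O;\tau^*_\omega,\eta_0)]=0\), using \eqref{eq:unified_projection}, and that \(\E[\psi_{\mathrm{aug}}^2]<\infty\); under Condition B the latter uses \(q\ge 8\), since \(\omega_{\mathrm{db}}\) is quadratic in \(Y\) and multiplies \((D-e_\omega)^2\).

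For the numerator, fix a fold \(k\) and condition on \(\mathcal I_k^c\). We use the standard cross-fitting decomposition
\[
\mathbb{E}_{n,k}[\psi(\hat\eta_k)-\psi(\eta_0)]=(\mathbb{E}_{n,k}-P)[\psi(\hat\eta_k)-\psi(\eta_0)]+P[\psi(\hat\eta_k)-\psi(\eta_0)].
\]
The empirical-process term is \(o_P(n^{-1/2})\) by conditional Chebyshev once \(\|\psi(\hat\eta_k)-\psi(\eta_0)\|_{P,2}=o_P(1)\). Under Condition A this follows from boundedness of \(Y,D\) and the estimators, together with \(\hat\omega\le 1/c\) from trimming and the \(L_2\) rates. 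Under Condition B it follows from the sup-norm consistency in (ii) combined with Hölder using the \(q\)-th moments. The same arguments give \(\hat B\to_p K_{\omega_0}\).

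The main obstacle is the bias term \(P[\psi(\hat\eta)-\psi(\eta_0)]\). Here we compute conditional expectations given \((D,X)\), then given \(X\). Write \(\delta_\mu=\hat\mu-\mu_0\), \(\delta_\omega=\hat\omega-\omega_0\), \(\delta_e\), \(\delta_\rho\). Using \(\E[(Y-\hat\mu)^2\mid D,X]=\sigma_0^2+\delta_\mu^2\), we get
\[
\E[\omega_{\mathrm{db}}(\hat\eta)\mid D,X]=\omega_0-\omega_0\sigma_0^2\delta_\omega^2-\hat\omega^2\delta_\mu^2,
\]
because \(2\hat\omega-\hat\omega^2\sigma_0^2-\omega_0=-\sigma_0^2(\hat\omega-\omega_0)^2\). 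So the precision part is second order by construction. Substituting into \(A_{\mathrm{aug}}-\tau B_{\mathrm{aug}}\), the first-order terms in \(\delta_e\) and \(\delta_\rho\) vanish by the covariate orthogonality \(\E[\omega_0(D-e_{\omega,0})\mid X]=0\) and \(\E[\omega_0(\mu_0-\rho_{\omega,0})\mid X]=0\). The \(\delta_\mu\) terms from the two pieces of \(A_{\mathrm{aug}}\) cancel to first order (\(\hat\omega(D-\hat e)\delta_\mu\) against \(-\hat\omega(D-\hat e)\delta_\mu\)), leaving \(\delta_\omega\delta_\mu\)-type cross terms. What remains is a sum of products:
\[
\delta_\omega^2\cdot(\text{weights involving }D-e,\ \mu-\rho),\quad \delta_\mu^2\cdot(\cdot),\quad \delta_\omega\delta_\mu,\quad \delta_e\delta_\rho,\quad \delta_e^2,\quad \delta_\omega\delta_e,\quad \ldots
\]
These are the "higher-order" terms the paper alludes to, some carrying the unbounded factors \(|D-e_{\omega,0}|\) or \(|\mu_0-\rho_{\omega,0}|\), occasionally squared.

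Each product is bounded by Cauchy–Schwarz. Under Condition A the multiplying factors are uniformly bounded (note \(\hat e_\omega\), \(\hat\rho_\omega\), \(\hat\mu\) are bounded by \(C\) and \(\hat\omega\le 1/c\)), so every term is \(O_P(\|\delta_a\|_{P,2}\|\delta_b\|_{P,2})=o_P(n^{-1/2})\). Under Condition B we absorb the unbounded factors into the envelope \(v(O)\). For example, \(|\E[\delta_\omega^2(D-e)(\mu-\rho)]|\le\|\delta_\omega v\|_{P,2}^2\), and \(|\E[\delta_\omega\delta_\mu v^2]|\le\|\delta_\omega v\|_{P,2}\|\delta_\mu v\|_{P,2}\). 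Terms where \(\hat e-e_0\) or \(\hat\rho-\rho_0\) multiply \(\delta_\omega\) or \(\delta_\mu\) are handled with the unweighted rates on \(\delta_e,\delta_\rho\) times the weighted rates. Wherever a factor like \(\hat\omega^2\) or \(\hat e\) appears, sup-norm consistency in (ii) lets us replace hats by truth at the cost of a vanishing multiplier. The delicate step is this term-by-term bookkeeping, ensuring no product is left with an unmatched unbounded weight. I would organize it by expanding \(\hat\omega=\omega_0+\delta_\omega\), \(\hat e=e_0+\delta_e\), and so on fully, and checking each monomial against the available rates.

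Finally, consistency of the variance estimator follows from the same \(L_2\) convergence of \(\psi(\hat\eta_k)\) to \(\psi(\eta_0)\), using \(|\hat\tau-\tau|=o_P(1)\) and \(\hat B\to_p K_{\omega_0}\), together with a law of large numbers for \(\psi^2_{\mathrm{aug}}\).
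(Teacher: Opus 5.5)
Your proposal is correct and follows essentially the paper's route. The shared steps are:
\begin{itemize}
\item the cross-fitting split into a conditionally centered empirical-process term (conditional Chebyshev, which is what the paper's appeal to Kennedy's Lemma~2 amounts to) and a bias term;
\item the second-order calibration of $\omega_{\mathrm{db}}$, with the projection orthogonality of $e_\omega,\rho_\omega$ killing the first-order terms;
\item Cauchy--Schwarz with uniform bounds under Condition A, or with the envelope $v(O)$ and sup-norm consistency under Condition B.
\end{itemize}
Working with the single score $\psi_{\mathrm{aug}}(\cdot;\tau^*_\omega,\hat\eta_k)$ over $\hat B$ plus Slutsky, instead of linearizing $\hat A$ and $\hat B$ separately and applying the delta method, is only a cosmetic difference.

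There are two small slips to fix. First, your own identity gives $\E[\omega_{\mathrm{db}}(\hat\eta)\mid D,X]=\omega_0-\sigma_0^2\delta_\omega^2-\hat\omega^2\delta_\mu^2$, with no extra factor $\omega_0$. Second, the $\delta_\mu$ cancellation is $\omega_0(D-\hat e)\delta_\mu$ against $-\hat\omega(D-\hat e)\delta_\mu$, and that is what leaves $-\delta_\omega(D-\hat e)\delta_\mu$. The monomial bookkeeping you defer is done compactly in the paper as $R^B=P[\omega_0\delta_e^2-\epsilon_{\mathrm{db}}(D-\hat e)^2]$ and $R^A=P[\omega_0\delta_e\delta_\rho-\delta_\omega(D-\hat e)\delta_\mu-\epsilon_{\mathrm{db}}(D-\hat e)(\hat\mu-\hat\rho)]$, where $\epsilon_{\mathrm{db}}=\sigma_0^2\delta_\omega^2+\hat\omega^2\delta_\mu^2$. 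It then uses the bound $|(D-\hat e)(\hat\mu-\hat\rho)|\le C_n v(O)^2$ with $C_n=O_P(1)$, which avoids the full expansion.
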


\begin{remark}
  Standard DML, and the homoskedastic estimand of the prior literature, rest on a bilinear remainder---a product of two nuisance errors, such as \(\Delta\mu \times \Delta e\)---which the Cauchy--Schwarz inequality bounds at \(o_P(n^{-1/2})\) under \(o_P(n^{-1/4})\) nuisance rates. The convergence requirements in Theorem \ref{thm:dml_asymptotics} are stronger because estimating the heteroskedastic optimal estimand requires the precision weight \(\omega(d,x)\), and debiasing the precision weight must track squared outcome residuals. This introduces higher-order polynomial terms into the Neyman orthogonal remainder (e.g., \((\hat{\mu}_k-\mu_0)^2 (\hat{e}_k-e_{\omega,0})^2\)), for which the standard Cauchy--Schwarz bilinear bounds do not apply. Theorem \ref{thm:dml_asymptotics} provides two sets of conditions to handle the higher-order terms. For variables with bounded support (Condition A), the complex remainders are uniformly bounded, relaxing the requirement back to the standard \(L_2(P)\) rate of \(o_P(n^{-1/4})\). Condition A(ii) can be satisfied by trimming the estimators or by utilizing natively bounded ML estimators such as tree-based ensembles. For unbounded variables (Condition B), we control the remainders by requiring uniform consistency (\(\|\hat{\eta}_k - \eta_0\|_\infty = o_P(1)\)) alongside \(L_2(P)\) rate of \(o_P(n^{-1/4})\) for nuisance estimators weighted by an "envelope function" \(v(O_i)\). Condition B(i) guarantees that the envelope \(v(O_i)\) possesses finite 4th moments, so its norm \(\|v(O)\|_{P,4}\) is a bounded constant. Consequently, this requirement reduces to the underlying nuisance estimators achieving the \(o_P(n^{-1/4})\) convergence rate.
\end{remark}

\begin{remark}
  We note that the conditions in Theorem \ref{thm:dml_asymptotics} are required for the estimation and inference under general heteroskedasticity. However, the conditions required under homoskedasticity are remarkably relaxed. When \(\omega = 1\), \(\mu(d,x)\) cancels from the augmented score, reducing to the classic Partially Linear Regression (PLR) score
  \begin{equation}
    \label{eq:plr_score}
    \psi_{\mathrm{homo}}(O_i; \tau, \eta) = \big(D_i - e(X_i)\big)\big(Y_i - \rho(X_i)\big) - \tau \big(D_i - e(X_i)\big)^2.
  \end{equation}
  The estimation of PLR has been intensively studied in the literature \citep[e.g.,][]{robinson1988root,chernozhukov2018double}. Because its remainder is bilinear, \(\sqrt{n}\)-consistency requires only standard \(o_P(n^{-1/4})\) rate for nuisance estimators, recovering Theorem 3.1 of \citet{chernozhukov2018double}. Our continuous balancing framework formalizes a causal justification for deploying the PLR estimator even when the true treatment effect is heterogeneous. Theorem \ref{thm:optimal_weights}(c) and Proposition \ref{prp:class} prove that the PLR coefficient identifies a well-defined, variance-weighted average effect that prioritizes subpopulations with high \(\Var(D_i \mid X_i)\). This estimand also guarantees non-negative weights via Lemma \ref{lem:nonneg}. Thus, researchers can safely trade the asymptotic efficiency of the general estimand for relaxed statistical requirements and finite-sample robustness by using PLR.
\end{remark}

\section{Simulation Study}\label{sec:simulation}

In this section, we conduct a Monte Carlo simulation study to evaluate the finite-sample performance of the proposed continuous balancing estimators. We draw a four-dimensional pre-treatment covariate vector, \(X_i = (X_{i1}, X_{i2}, X_{i3}, X_{i4})'\), from a standard uniform distribution, \(\mathcal{U}[0, 1]^4\). Note that \(X_{i4}\) is a noise variable that does not enter the data generating process.

The data generating process is
\begin{align*}
  D_i \mid X_i  & \sim \mathcal{B}\text{eta}\big(2 + X_{i1}, 2 + X_{i2}\big),                                             \\
  \mu(D_i, X_i) & = D_i - 0.5 D_i^2 + X_{i1}D_i + X_{i2} + X_{i3}^2,                                                      \\
  Y_i           & = \mu(D_i, X_i) + \sigma(D_i, X_i)\varepsilon_i, \quad \varepsilon_i \sim \mathcal{TN}_{[-3, 3]}(0, 1),
\end{align*}
where the idiosyncratic error \(\varepsilon_i\) is drawn from a truncated normal distribution on the interval \([-3, 3]\). The true marginal derivative effect is heterogeneous across both dosage and covariates, given by \(\mu'(D_i, X_i) = 1 - D_i + X_{i1}\). To evaluate the efficiency bounds derived in Theorem \ref{thm:optimal_weights}, we create three heteroskedastic scenarios:
\begin{align*}
  \text{Homoskedastic:} \quad              & \sigma(D_i, X_i) = 0.5                                \\
  \text{Covariate-Dependent:} \quad        & \sigma(D_i, X_i) = 0.5\sqrt{1 + 10X_{i1}^2}           \\
  \text{General Heteroskedasticity:} \quad & \sigma(D_i, X_i) = 0.5\sqrt{1 + 10X_{i1}^2 + 20D_i^2}
\end{align*}

Because the marginal treatment effect is heterogeneous, the theoretically optimal estimands identify distinct population parameters, so the bias and coverage of each estimator are calculated relative to its own corresponding estimand. We compute each target by numerical integration against the closed-form conditional law of \(D_i \mid X_i\) rather than by simulation: \(\tau^*_{\mathrm{homo}} = 0.986\) in all three designs, since the outcome noise enters its plim only through a mean-zero term, while the precision-weighted estimands coincide with it under homoskedasticity, where the precision weight is constant, equal \(\tau^*_{\mathrm{cov}} = \tau^*_{\mathrm{gen}} = 0.804\) under covariate-dependent heteroskedasticity, and equal \(\tau^*_{\mathrm{gen}} = 0.948\) under general heteroskedasticity, where the misspecified \(X\)-measurable weight of \(\hat{\tau}^*_{\mathrm{cov}}\) targets a pseudo-estimand of \(1.001\) instead (Remark \ref{rmk:eif_scope}). The within-stratum benchmark \(\hat{\psi}\) introduced below, and the average derivative targeted by the baseline estimator, each equal unity in this design and are likewise assessed against their own estimands.

We compare five estimators. The first three are our proposed optimal balancing estimands: the homoskedastic PLR estimand \(\hat{\tau}_{\mathrm{homo}}^*\), the covariate-dependent optimal estimand \(\hat{\tau}_{\mathrm{cov}}^*\), and the general optimal estimand \(\hat{\tau}_{\mathrm{gen}}^*\). They are implemented via the estimation procedure in Section \ref{sec:est} with 5-fold cross-fitting and trimming constant \(c = 0.01\), employing random forests (ranger, 500 trees) for all nuisance functions; following Section \ref{sec:dml}, the leaf size of the covariate-dependent estimand's variance regression is selected by out-of-bag error. The fourth estimator is the within-stratum least-squares estimand \(\hat{\psi}\) of \citet{hines2023optimally}, a benchmark that summarizes the within-stratum slopes without estimating the outcome variance, computed by a cross-fitted one-step correction. The central comparison is between the precision-weighted \(\hat{\tau}_{\mathrm{cov}}^*\) and \(\hat{\tau}_{\mathrm{gen}}^*\) and the homoskedastic \(\hat{\tau}_{\mathrm{homo}}^*\), the partially linear estimand recommended for applied use by the prior literature; the heteroskedastic designs are chosen to exercise the estimation of the variance-dependent weights that distinguish them.

The fifth estimator is the baseline Average Derivative Estimator (ADE) implemented via the Generalized Propensity Score (GPS) method by \citet{hiranok2004propensity}. We estimate the GPS by using a series estimator to obtain the conditional density \(f(d|x)\), and then estimate the outcome's conditional expectation via a quadratic polynomial regression. We take the analytical total derivative of the fitted dose-response function with respect to the dose, evaluate it at each observation's own \((D_i, X_i)\), and average over the joint empirical distribution to construct \(\hat{\tau}_{\mathrm{ADE}} = \frac{1}{n}\sum_i \hat{\mu}'(D_i, X_i)\). The standard error is calculated using the bootstrap method with 1,000 bootstrap samples.

We conduct 1,000 Monte Carlo simulations for sample sizes \(n = 500\), \(n = 2000\), and \(n = 5000\). Table \ref{tbl:sim_results} reports the mean estimate, standard error, bias, and coverage probability of the 95\% confidence intervals for each estimator; the reported standard error is the Monte Carlo standard deviation of the point estimates across replications. In the design for which it is optimal, each of the three balancing estimands is approximately unbiased for its target, covers near the nominal rate, and attains a markedly smaller standard error than the baseline GPS estimator; the within-stratum benchmark \(\hat{\psi}\) is likewise approximately unbiased throughout. The GPS estimator, by contrast, carries a large finite-sample bias of about \(-0.10\) at every sample size, and its coverage deteriorates sharply as the sample grows and this bias comes to dominate its shrinking standard error. Among the balancing estimands the ranking tracks the efficiency theory of Theorem \ref{thm:optimal_weights}: in each design, and at every sample size, the estimand that is optimal for that variance structure attains the smallest Monte Carlo standard error. Under homoskedasticity there is no heteroskedasticity for precision-weighting to exploit, so the homoskedastic PLR estimand \(\hat{\tau}_{\mathrm{homo}}^*\), which alone avoids estimating a variance function, is the most precise (\(0.038\) at \(n = 5000\)). Under covariate-dependent heteroskedasticity the covariate-dependent estimand \(\hat{\tau}_{\mathrm{cov}}^*\) is smallest, with a standard error roughly twelve to sixteen percent below that of \(\hat{\tau}_{\mathrm{homo}}^*\) at the two larger sample sizes (\(0.065\) versus \(0.074\) at \(n = 5000\)). Under general heteroskedasticity only the fully optimal \(\hat{\tau}_{\mathrm{gen}}^*\), whose weight depends on the dose, captures the dose-dependent component of the noise, and it is the most precise at every sample size (\(0.114\) versus \(0.118\) at \(n = 5000\)); the covariate-dependent estimand, whose \(X\)-measurable variance model cannot represent that component, is no longer competitive and is in fact less precise than the homoskedastic estimand.

% The size of the efficiency gain reflects the difficulty of the variance function each estimand must learn. The covariate-dependent gain is substantial at both larger sample sizes (about sixteen percent at \(n = 2000\) and twelve percent at \(n = 5000\)) because \(\sigma^2(x)\) is a low-dimensional object that is learned accurately; out-of-bag leaf selection for its variance regression (Section \ref{sec:dml}) is what prevents \(\hat{\tau}_{\mathrm{cov}}^*\) from degenerating in the general design at \(n = 500\), where its misspecified \(X\)-measurable weight targets a pseudo-estimand rather than \(\tau^*_{\mathrm{cov}}\) (Remark \ref{rmk:eif_scope}), biasing it and inflating its standard error through occasional over-shooting (\(0.794\) against \(0.364\) for the correctly specified \(\hat{\tau}_{\mathrm{gen}}^*\)). The general gain is more modest (about four percent at \(n = 5000\)) because estimating the full dose-and-covariate variance surface \(\sigma^2(d,x)\) is more demanding, so a larger share of the population efficiency gain of Theorem \ref{thm:optimal_weights} is spent on nuisance estimation. This finite-sample cost of learning the variance-dependent weights, anticipated in Section \ref{sec:est}, tempers but does not overturn the efficiency ordering: in each design the estimand matched to the variance structure remains the most precise.

\begin{table}[htbp]
    \centering
    \caption{Monte Carlo Simulation Results}
    \label{tbl:sim_results}
    \footnotesize
    \setlength{\tabcolsep}{6pt}
    \renewcommand{\arraystretch}{0.88}
    \begin{tabular}{llcccc}
        \toprule
        \textbf{Design} & \textbf{Estimator} & \textbf{Estimate} & \textbf{Std. Error} & \textbf{Bias} & \textbf{95\% Cov.} \\
        \midrule
        \multicolumn{6}{l}{\textit{Panel A: Moderate Sample ($n = 500$)}} \\
        \midrule
        \textbf{1. Homoskedastic}
                        & $\hat{\tau}_{\mathrm{homo}}^*$ & 0.975     & 0.123     & -0.013    & 0.922     \\
                        & $\hat{\tau}_{\mathrm{cov}}^*$ & 1.013     & 0.150     & +0.025    & 0.959     \\
                        & $\hat{\tau}_{\mathrm{gen}}^*$ & 1.021     & 0.159     & +0.033    & 0.960     \\
                        & $\hat{\psi}$ & 0.975     & 0.140     & -0.025    & 0.939     \\
                        & $\hat{\tau}_{\mathrm{ADE}}$ & 0.882     & 0.232     & -0.117    & 0.923     \\
        \textbf{2. Covariate-Dep.}
                        & $\hat{\tau}_{\mathrm{homo}}^*$ & 0.973     & 0.235     & -0.019    & 0.932     \\
                        & $\hat{\tau}_{\mathrm{cov}}^*$ & 0.828     & 0.217     & +0.017    & 0.931     \\
                        & $\hat{\tau}_{\mathrm{gen}}^*$ & 0.837     & 0.226     & +0.026    & 0.937     \\
                        & $\hat{\psi}$ & 0.966     & 0.293     & -0.033    & 0.928     \\
                        & $\hat{\tau}_{\mathrm{ADE}}$ & 0.885     & 0.350     & -0.115    & 0.940     \\
        \textbf{3. General Het.}
                        & $\hat{\tau}_{\mathrm{homo}}^*$ & 0.982     & 0.378     & +0.007    & 0.926     \\
                        & $\hat{\tau}_{\mathrm{cov}}^*$ & 0.907     & 0.794     & -0.082    & 0.973     \\
                        & $\hat{\tau}_{\mathrm{gen}}^*$ & 0.947     & 0.364     & +0.009    & 0.950     \\
                        & $\hat{\psi}$ & 0.980     & 0.442     & -0.020    & 0.941     \\
                        & $\hat{\tau}_{\mathrm{ADE}}$ & 0.887     & 0.513     & -0.113    & 0.950     \\
        \midrule
        \multicolumn{6}{l}{\textit{Panel B: Large Sample ($n = 2000$)}} \\
        \midrule
        \textbf{1. Homoskedastic}
                        & $\hat{\tau}_{\mathrm{homo}}^*$ & 0.981     & 0.059     & -0.007    & 0.931     \\
                        & $\hat{\tau}_{\mathrm{cov}}^*$ & 0.998     & 0.069     & +0.010    & 0.945     \\
                        & $\hat{\tau}_{\mathrm{gen}}^*$ & 0.993     & 0.069     & +0.005    & 0.950     \\
                        & $\hat{\psi}$ & 0.985     & 0.063     & -0.015    & 0.944     \\
                        & $\hat{\tau}_{\mathrm{ADE}}$ & 0.896     & 0.094     & -0.104    & 0.813     \\
        \textbf{2. Covariate-Dep.}
                        & $\hat{\tau}_{\mathrm{homo}}^*$ & 0.974     & 0.121     & -0.018    & 0.926     \\
                        & $\hat{\tau}_{\mathrm{cov}}^*$ & 0.803     & 0.102     & -0.007    & 0.933     \\
                        & $\hat{\tau}_{\mathrm{gen}}^*$ & 0.810     & 0.105     & -0.000    & 0.932     \\
                        & $\hat{\psi}$ & 0.975     & 0.138     & -0.025    & 0.932     \\
                        & $\hat{\tau}_{\mathrm{ADE}}$ & 0.895     & 0.146     & -0.105    & 0.895     \\
        \textbf{3. General Het.}
                        & $\hat{\tau}_{\mathrm{homo}}^*$ & 0.967     & 0.198     & -0.008    & 0.919     \\
                        & $\hat{\tau}_{\mathrm{cov}}^*$ & 0.924     & 0.222     & -0.065    & 0.972     \\
                        & $\hat{\tau}_{\mathrm{gen}}^*$ & 0.924     & 0.174     & -0.015    & 0.931     \\
                        & $\hat{\psi}$ & 0.979     & 0.217     & -0.021    & 0.931     \\
                        & $\hat{\tau}_{\mathrm{ADE}}$ & 0.884     & 0.218     & -0.116    & 0.913     \\
        \midrule
        \multicolumn{6}{l}{\textit{Panel C: Very Large Sample ($n = 5000$)}} \\
        \midrule
        \textbf{1. Homoskedastic}
                        & $\hat{\tau}_{\mathrm{homo}}^*$ & 0.982     & 0.038     & -0.006    & 0.925     \\
                        & $\hat{\tau}_{\mathrm{cov}}^*$ & 0.991     & 0.042     & +0.003    & 0.942     \\
                        & $\hat{\tau}_{\mathrm{gen}}^*$ & 0.987     & 0.043     & -0.001    & 0.934     \\
                        & $\hat{\psi}$ & 0.988     & 0.041     & -0.012    & 0.920     \\
                        & $\hat{\tau}_{\mathrm{ADE}}$ & 0.897     & 0.059     & -0.103    & 0.553     \\
        \textbf{2. Covariate-Dep.}
                        & $\hat{\tau}_{\mathrm{homo}}^*$ & 0.979     & 0.074     & -0.013    & 0.928     \\
                        & $\hat{\tau}_{\mathrm{cov}}^*$ & 0.803     & 0.065     & -0.007    & 0.940     \\
                        & $\hat{\tau}_{\mathrm{gen}}^*$ & 0.805     & 0.066     & -0.005    & 0.928     \\
                        & $\hat{\psi}$ & 0.984     & 0.082     & -0.016    & 0.928     \\
                        & $\hat{\tau}_{\mathrm{ADE}}$ & 0.901     & 0.092     & -0.099    & 0.795     \\
        \textbf{3. General Het.}
                        & $\hat{\tau}_{\mathrm{homo}}^*$ & 0.979     & 0.118     & +0.004    & 0.952     \\
                        & $\hat{\tau}_{\mathrm{cov}}^*$ & 0.944     & 0.141     & -0.044    & 0.961     \\
                        & $\hat{\tau}_{\mathrm{gen}}^*$ & 0.929     & 0.114     & -0.010    & 0.916     \\
                        & $\hat{\psi}$ & 0.989     & 0.131     & -0.011    & 0.948     \\
                        & $\hat{\tau}_{\mathrm{ADE}}$ & 0.902     & 0.124     & -0.098    & 0.900     \\
        \bottomrule
    \end{tabular}
\end{table}

\section{Application: The Income Effect on Labor Supply}\label{sec:app}

We illustrate our proposed optimal estimands by re-examining the effect of unearned income on labor supply, utilizing the dataset from a survey of Massachusetts lottery winners analyzed by \citet{imbens2001estimating} and \citet{hiranok2004propensity} (hereafter HI04). The objective is to estimate the Marginal Propensity to Earn (MPE) out of unearned income, which corresponds to the average derivative of labor earnings with respect to the lottery prize.

The dataset consists of 237 individuals who won the Megabucks lottery in the mid-1980s. The continuous treatment \(D_i\) is the annualized value of the lottery prize, and the outcome \(Y_i\) is the average labor earnings recorded by the Social Security Administration approximately six years after winning. The covariates \(X_i\) include demographic characteristics (age, gender, education) and six years of pre-lottery earnings.

While the lottery prize itself is randomly assigned, the study design introduced potential confounding due to substantial nonresponse (around 50\%). HI04 demonstrated that nonresponse was correlated with the prize amount: winners of larger prizes were less likely to participate in the survey. This induced selection bias in the observed sample. For instance, men and individuals with higher pre-lottery earnings tended to have won larger prizes among the respondents. Following the literature, we maintain the assumption that conditional on the extensive set of covariates \(X_i\), the treatment assignment is unconfounded.

HI04 addressed this setting by introducing the Generalized Propensity Score (GPS) methodology to estimate the entire dose-response curve, \(\mu(d) = \E[Y_i(d)]\). Their analysis revealed significant heterogeneity in the MPE, \(\mu'(d)\). They estimated that the MPE ranges from approximately \(-0.10\) for small prizes (\$10,000) to \(-0.02\) for larger prizes (\$100,000), suggesting that the negative income effect is substantially stronger at lower levels of unearned income.

We compare the five estimators of Section \ref{sec:simulation} for summarizing the treatment effect.\footnote{We use the data provided in the GitHub repository of \citet{imbens2024comparing}. We failed to reproduce the exact results of HI04, as the summary statistics of pre-lottery earnings are slightly different from those reported in HI04. Our replication results are provided in Appendix \ref{sec:appendc}.} Table \ref{tbl:app} presents the estimation results. The point estimates are negative across all five estimators, but they differ markedly in both magnitude and precision, reflecting the distinct estimands implied by the different weighting schemes and the differing nuisance estimation burdens at the modest sample size of \(N = 237\).

\begin{table}[htbp]
    \centering
    \caption{Estimates of the Marginal Propensity to Earn (MPE) out of Lottery Prizes}
    \label{tbl:app}
    \begin{tabular}{llccc}
        \hline
        Estimand                                 & Method & Estimate & Std. Error & 95\% CI            \\
        \hline
        ADE ($\tau_{\mathrm{ADE}}$)              & GPS    & -0.0651  & (0.1310)   & [-0.3263, +0.2079] \\
        Homoskedastic ($\tau_{\mathrm{homo}}^*$) & DML    & -0.0240  & (0.0109)   & [-0.0454, -0.0027] \\
        Within-stratum (Hines $\psi$)            & DML    & -0.0435  & (0.0246)   & [-0.0918, +0.0047] \\
        Covariate-Dep. ($\tau_{\mathrm{cov}}^*$) & DML    & -0.0006  & (0.0019)   & [-0.0043, +0.0030] \\
        General Het. ($\tau_{\mathrm{gen}}^*$)   & DML    & -0.0037  & (0.0029)   & [-0.0094, +0.0019] \\
        \hline
    \end{tabular}

    \vspace{0.5em}
    \begin{minipage}{0.92\textwidth}
        {\footnotesize \textit{Note:} All DML estimators use $5$-fold cross-fitting. The ADE uses $1{,}000$ bootstrap replications.}
    \end{minipage}
\end{table}

The baseline GPS-based ADE estimate (\(-0.0651\)) is large in magnitude, but its 95\% bootstrap confidence interval is too wide to rule out zero or sizable positive effects. Among the optimal balancing estimands, the homoskedastic estimator \(\hat{\tau}_{\mathrm{homo}}^*\) yields a statistically significant negative estimate of \(-0.0240\), in line with the negative income effect documented by \citet{imbens2001estimating} and HI04. The within-stratum least-squares estimand \(\hat\psi = -0.0435\) is larger in magnitude but less precise, with a 95\% interval that narrowly includes zero. By contrast, \(\hat\tau_{\mathrm{cov}}^*\) and \(\hat\tau_{\mathrm{gen}}^*\) are close to zero with smaller standard errors, and both fail to reject the null hypothesis of zero effect.

Proposition \ref{prp:class} and Theorem \ref{thm:optimal_weights} can be used to provide a transparent interpretation of the target population for \(\tau_{\mathrm{homo}}^*\). It is a weighted average derivative where the weights are proportional to the conditional variance of the treatment, \(\Var(D_i \mid X_i)\), prioritizing individuals for whom the prize amount is most variable given their covariates. To interpret this weighting in the context of the lottery data, we consider the structure of \(\Var(D_i \mid X_i)\). HI04 modeled the log of the prize as conditionally normal: \(\log(D_i) \mid X_i \sim \mathcal{N}(\beta'X_i, \sigma^2)\). Under this log-normal structure, the conditional variance of the prize in levels is heteroskedastic and proportional to \(\exp(2\beta'X_i)\). This implies the variance is higher for individuals whose characteristics are associated with larger expected prizes. Because each of the optimal balancing estimands and \(\psi\) summarizes the same within-stratum slopes \(\Cov(D_i,Y_i\mid X_i)/\Var(D_i\mid X_i)\), differing primarily in how it weights covariate strata, these implied weights make their target populations explicit; Table \ref{tbl:target} reports the resulting composition. The GPS-based ADE shares \(\psi\)'s target population but, as a density-based average derivative, aggregates \(\mu'(D_i,X_i)\) over the dose distribution rather than through the within-stratum slope, which is why it differs from \(\psi\). The variance weight \(\Var(D_i\mid X_i)\propto\exp(2\beta'X_i)\) is monotone in the predicted prize and correlates \(0.62\) with being male, so \(\tau_{\mathrm{homo}}^*\) tilts toward the large-prize winners. Its implied target population is \(74\%\) male, against \(58\%\) in the sample, somewhat older, and concentrated, with the top decile of weight holding two-thirds of the mass. The shift toward men is robust across weighting choices. The pre-lottery-earnings channel is model-dependent. Under the log-normal structure the variance rises with the predicted prize, which is larger for higher earners \((\mathrm{corr}=0.49)\), though the cross-fitted weight in Table \ref{tbl:target} leaves their weighted share little changed.

The precision-weighted estimands reweight along a different margin. Rather than intensifying the tilt toward men and large prizes, \(\tau_{\mathrm{cov}}^*\) and \(\tau_{\mathrm{gen}}^*\) shift mass toward strata in which earnings are measured precisely: the weight-implied outcome variance falls roughly three- to fourfold, from \(130\) to \(29\) and \(39\) in Table \ref{tbl:target}. Because the conditional outcome variance is itself higher for men and for high earners in this sample, which correlates \(0.41\) and \(0.75\) with the two characteristics, this inverse-variance reweighting pulls partly against the prize-variance tilt, returning the share of male toward the sample value and lowering weighted pre-lottery earnings. Their smaller standard errors are therefore an inverse-variance efficiency gain, though both remain statistically indistinguishable from zero.

The within-stratum estimand \(\psi\) and the ADE instead retain the full sample as their target population. The estimand \(\psi\) averages the same within-stratum slopes as \(\tau_{\mathrm{homo}}^*\) but weights each covariate stratum equally rather than by \(\Var(D_i\mid X_i)\), so the gap between \(\hat\psi = -0.0435\) and \(\hat\tau_{\mathrm{homo}}^* = -0.0240\) is attributable to this reweighting toward the high-prize-variance strata, not to a different causal object. Placing equal weight on strata in which the slope is precisely and imprecisely identified is why \(\psi\) is less precise than \(\tau_{\mathrm{homo}}^*\), while the GPS-based ADE, despite sharing the same target, is the least precise of all because its representer divides by the estimated treatment density. \citet{hines2023optimally} caution that a weighting that depends on the conditional outcome variance can be difficult to interpret. However, the implied-weight analysis of Table \ref{tbl:target} addresses this concern by making each estimand's target population explicit through the weights of Proposition \ref{prp:class}. For this application, the homoskedastic estimand \(\tau_{\mathrm{homo}}^*\) is the interpretable and stable summary, and it remains informative where the other estimands are not.

\begin{table}[htbp]
    \centering
    \caption{Implied Target Population of the MPE Estimands (Lottery Application, $N=237$)}
    \label{tbl:target}
    {\footnotesize
        \setlength{\tabcolsep}{4pt}
        \begin{tabular}{lcccc}
            \hline
                                               & Sample                        & \multicolumn{3}{c}{Optimal balancing estimands}                                                             \\
            \cline{3-5}
                                               & $\tau_{\mathrm{ADE}}$, $\psi$ & $\tau_{\mathrm{homo}}^*$                        & $\tau_{\mathrm{cov}}^*$     & $\tau_{\mathrm{gen}}^*$     \\
            Implied stratum weight $g(x)$      & $1$                           & $\Var(D\mid X)$                                 & $\Var(D\mid X)/\sigma^2(X)$ & $\omega(D,X)(D-e_\omega)^2$ \\
            \hline
            Share male                         & 0.58                          & 0.74                                            & 0.60                        & 0.63                        \\
            Avg.\ pre-lottery earnings (\$000) & 12.8                          & 11.3                                            & 3.9                         & 4.1                         \\
            Mean age                           & 46.9                          & 52.1                                            & 59.6                        & 58.9                        \\
            Weight-implied outcome variance    & 130                           & 123                                             & 29                          & 39                          \\
            \hline
        \end{tabular}}

    \vspace{0.5em}
    \begin{minipage}{0.95\textwidth}
        {\footnotesize \textit{Note:} The ADE shares $\psi$'s uniform target but is a density-based average derivative. Weighted covariate means use the cross-fitted smooth weight $\E[g\mid X]$. The weight-implied outcome variance uses weight $\omega(D_i,X_i)(D_i-e_\omega(X_i))^2 = \E[B_{\mathrm{aug}}\mid D_i,X_i]$.}
    \end{minipage}
\end{table}

\section{Conclusion}\label{sec:conclusion}

\citet{hines2023optimally} showed that treating a bounded outcome weight, rather than a derivative weight, as the analytical primitive characterizes a class of continuous balancing weights and their optimally efficient members without estimating the conditional treatment density. This paper supplies the two ingredients that framework lacked for use in observational economic data. On identification, the optimal estimands require only conditional mean independence, which permits selection on the unobserved variance of the outcome, and they remain valid at sharp boundaries and at interior treatment deserts that the strict-overlap and density-smoothness conditions of classical theory rule out. On estimation, we solve the general heteroskedastic problem that the prior literature derived but declined to estimate. We obtain the efficient influence function, whose debiased precision weight neutralizes the first-order bias of variance estimation, together with the higher-order Debiased Machine Learning theory that controls the non-bilinear remainder it introduces and restores \(\sqrt{n}\)-consistency.

For empirical practice, we suggest specifying the outcome weight as the primitive and reporting the implied target population it induces, so that each estimand's population is explicit. The homoskedastic optimum is the partially linear regression coefficient, which requires no variance modeling, is straightforward to estimate. The precision-weighted estimands buy further efficiency when the conditional outcome variance can be modeled reliably, but they shift the target population toward the strata in which the outcome is measured most precisely, a trade-off the implied weights should be used to make explicit. Because the framework accommodates boundary mass points and interior treatment deserts without ad hoc trimming, it is well suited to economic settings with sharp thresholds and limited overlap.

%%%%%%%%%%%%%%%%%%%%%%%%%%%%%%%%%%%%%%%%%%%%%%%%%
\clearpage
\begin{singlespace}
  \bibliographystyle{ecta}
  \bibliography{references.bib}
\end{singlespace}
%%%%%%%%%%%%%%%%%%%%%%%%%%%%%%%%%%%%%%%%%%%%%%%%%

%%%%%%%%%%%%%%%%%%%%%%%%%%%%%%%%%%%%%%%%%%%%%%%%%
%%%%% These commands start the appendix and change the Table & Figure numbering
\newpage
\appendix
\setcounter{table}{0}
\renewcommand{\tablename}{Appendix Table}
\renewcommand{\figurename}{Appendix Figure}
\renewcommand{\thetable}{A\arabic{table}}
\setcounter{figure}{0}
\renewcommand{\thefigure}{A\arabic{figure}}
%%%%%%%%%%%%%%%%%%%%%%%%%%%%%%%%%%%%%%%%%%%%%%%%%

\section{Appendix: Identification and Proofs}\label{sec:appenda}

\numberwithin{equation}{section}
\setcounter{equation}{0}

Note: For notational simplicity, we omit individual subscripts in this appendix, writing \(X\) instead of \(X_i\), \(Y\) instead of \(Y_i\), \(D\) instead of \(D_i\), etc.

\subsection{Baseline Causal Identification}\label{sec:appenda_id}

This appendix collects the classical identification conditions referenced in Section \ref{sec:iden}, which fix the statistical target \(\tau_w = \E[w(D_i,X_i)\mu'(D_i,X_i)]\). Section \ref{sec:id} shows that, for the optimal estimands, these conditions can be substantially relaxed.
\begin{assumption}
  \label{ass:iden}
  \begin{enumerate}
    \item[(a)] \(D_i = d\) implies \(Y_i = Y_i(d)\).
    \item[(b)] \(\E[Y_i(d) \mid D_i, X_i] = \E[Y_i(d) \mid X_i]\) for all \(d \in \mathcal{D}\).
    \item[(c)] For any point \((d_0, x_0)\) in the support of \(P\), \(f(d|x_0)\) is continuous in \(d\) and strictly positive in an open neighborhood of \(d_0\).
    \item[(d)] The conditional expected potential outcome \(m(d,x)=\mathbb{E}[Y_i(d)|X_i=x]\) is continuously differentiable with respect to \(d\).
    \item[(e)] For some \(p, q \ge 1\) with \(1/p + 1/q = 1\), \(\E[|w(D_i,X_i)|^p] < \infty\). There exists a constant \(\varepsilon > 0\) and a measurable function \(L(d,x)\) with \(\mathbb{E}[L(D_i,X_i)^q] < \infty\) such that \(|m(D_i+\nu, X_i) - m(D_i, X_i)| \le L(D_i,X_i)|\nu|\) almost surely for all \(0 < |\nu| < \varepsilon\).
  \end{enumerate}
\end{assumption}

Assumption \ref{ass:iden} adapts traditional causal requirements to the continuous setting. Assumption \ref{ass:iden}(a) is the SUTVA assumption, and Assumption \ref{ass:iden}(b) is the weak unconfoundedness condition, requiring only conditional mean independence rather than full distributional independence. Assumption \ref{ass:iden}(c) is a localized overlap condition ensuring \(\mu(d,x)\) is well-defined in an open neighborhood of any observed point; Assumption \ref{ass:iden}(d) is a differentiability requirement on \(m(\cdot,x)\); and Assumption \ref{ass:iden}(e) is a moment condition bounding the weight function. The relaxed conditions of Section \ref{sec:id} (Assumption \ref{ass:id_min}) weaken (d) to continuity and drop the local overlap condition (c) altogether.
\begin{proposition}
  \label{prp:iden}
  Under Assumption \ref{ass:iden}, the weighted average causal derivative is identified by the observable statistical functional
  \[
    \dot{\tau}_w = \E[w(D_i, X_i)\mu'(D_i, X_i)].
  \]
\end{proposition}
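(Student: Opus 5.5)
We will prove Proposition \ref{prp:iden} in two steps. First we rewrite the causal parameter as a limit of expectations of differences in \(m\). Then we pass the limit inside the expectation by dominated convergence. Finally we identify \(\partial_d m\) with \(\mu'\) using mean independence.

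\emph{Step 1 (from potential outcomes to \(m\)).} Fix \(\nu\) with \(0<|\nu|<\varepsilon\). By the law of iterated expectations,
\[
\E\bigl[w(D_i,X_i)\{Y_i(D_i+\nu)-Y_i(D_i)\}\bigr] = \E\bigl[w(D_i,X_i)\,\E[Y_i(D_i+\nu)-Y_i(D_i)\mid D_i,X_i]\bigr].
\]
By Assumption \ref{ass:iden}(b), evaluated at \(d=D_i+\nu\) and \(d=D_i\), the inner conditional expectation equals \(m(D_i+\nu,X_i)-m(D_i,X_i)\).

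One subtlety is that (b) is stated for fixed \(d\), while here the argument \(d=D_i+\nu\) is random. We handle this by conditioning on \((D_i,X_i)\), so that \(d\) is fixed inside the conditional expectation, and reading (b) as holding jointly in \(d\). This is the standard continuous-treatment convention, and we will note it explicitly.

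\emph{Step 2 (interchange of limit and expectation).} The difference quotient \(w\,[m(D_i+\nu,X_i)-m(D_i,X_i)]/\nu\) is dominated by \(|w(D_i,X_i)|\,L(D_i,X_i)\) for all \(0<|\nu|<\varepsilon\), by Assumption \ref{ass:iden}(e). Hölder's inequality with conjugate exponents \(p,q\) gives \(\E[|w|L]\le \|w\|_p\|L\|_q<\infty\). Assumption \ref{ass:iden}(d) gives the pointwise limit \(\partial_d m(D_i,X_i)\). Dominated convergence, applied along any sequence \(\nu_k\to0\), then yields
\[
\dot\tau_w=\E\bigl[w(D_i,X_i)\,\partial_d m(D_i,X_i)\bigr].
\]

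\emph{Step 3 (\(\partial_d m=\mu'\) on the support).} By consistency (a) and mean independence (b),
\[
\mu(d,x)=\E[Y_i(d)\mid D_i=d,X_i=x]=m(d,x)
\]
for \((d,x)\) in the support. By the local overlap Assumption \ref{ass:iden}(c), for any support point \((d_0,x_0)\) the conditional density is positive on an open neighborhood of \(d_0\). So \(\mu(\cdot,x_0)\) is well defined and equals \(m(\cdot,x_0)\) on that whole neighborhood. Since \(m\) is differentiable there, \(\mu'(d_0,x_0)=\partial_d m(d_0,x_0)\). Substituting into Step 2 gives the claim.

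The main obstacle is Step 3. The equality \(\mu=m\) must hold on an open set rather than only at \(D_i\), and it must hold for \(P\)-almost every \(x\), not pointwise. The delicate part is stating this carefully given that \(\mu\) is only defined up to null sets. We will fix a version of the conditional expectation that is continuous in \(d\) wherever \(f(d\mid x)>0\); the continuity of \(f\) in (c) permits this choice.
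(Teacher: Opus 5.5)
Your proposal is correct and follows the paper's proof step for step. You use iterated expectations with mean independence to reduce to difference quotients of \(m\), then dominated convergence with the H\"older-integrable envelope \(|w|L\) from Assumption \ref{ass:iden}(e), then local overlap from Assumption \ref{ass:iden}(c) to equate \(\mu\) with \(m\) on an open neighborhood so their derivatives agree. Your remarks on the random evaluation point and on fixing a version of \(\mu\) are refinements the paper leaves implicit. One small correction: the continuous version comes from positivity of \(f\) on that neighborhood (so \(\mu=m\) Lebesgue-a.e. there, and you may take \(\mu:=m\)) together with the differentiability of \(m\) in (d), not from the continuity of \(f\).
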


For the classical analytical construction of the Riesz representer discussed in Section \ref{sec:instability}, we also invoke the following regularity conditions of \citet{powell1989semiparametric}.
\begin{assumption}[Regularity Condition for \citet{powell1989semiparametric}]\label{ass:powell}
  \begin{enumerate}
    \item[a.] The weighted conditional density \(w(d, x)f(d|x)\) is differentiable with respect to \(d\).
    \item[b.] \(w(\underline{d}, x)f(\underline{d}|x) = w(\bar{d}, x)f(\bar{d}|x) = 0\) where \(\underline{d}\) and \(\bar{d}\) are the lower and upper bounds of the support of \(D\).
    \item[c.] \(f(d|x) = 0\) implies \(w(d, x) = 0\).
  \end{enumerate}
\end{assumption}

\subsection{Proofs}\label{sec:appenda_proofs}

\begin{proof}[Proof of Proposition~\ref{prp:iden}]
  We evaluate the causal target parameter by analyzing the limit of the expected incremental effect
  \[
    \dot{\tau}_w = \lim_{\nu\rightarrow 0} \mathbb{E}\left[w(D_i,X_i)\frac{Y_i(D_i+\nu)-Y_i(D_i)}{\nu}\right].
  \]
  By the law of iterated expectations, conditioning on the observed treatment and covariates yields
  \[
    \dot{\tau}_w = \lim_{\nu\rightarrow 0} \mathbb{E}\left[w(D_i,X_i) \, \mathbb{E}\left[\frac{Y_i(D_i+\nu)-Y_i(D_i)}{\nu}\mathrel{\Bigg|}D_i,X_i\right]\right].
  \]
  By Assumptions \ref{ass:iden}(a) and \ref{ass:iden}(b), the conditional expectation of the potential outcome evaluated at any generic level \(d \in \mathcal{D}\) satisfies \(\mathbb{E}[Y_i(d)|D_i,X_i]=\mathbb{E}[Y_i(d)|X_i] \equiv m(d,X_i)\). Applying this equivalence at evaluation points \(d=D_i+\nu\) and \(d=D_i\), the inner conditional expectation simplifies to the difference quotient of the conditional expected potential outcome
  \[
    \mathbb{E}\left[\frac{Y_i(D_i+\nu)-Y_i(D_i)}{\nu}\mathrel{\Bigg|}D_i,X_i\right] = \frac{m(D_i+\nu,X_i)-m(D_i,X_i)}{\nu}.
  \]
  Thus, the estimand becomes
  \[
    \dot{\tau}_w = \lim_{\nu\rightarrow 0} \mathbb{E}\left[w(D_i,X_i)\frac{m(D_i+\nu,X_i)-m(D_i,X_i)}{\nu}\right].
  \]
  To exchange the limit and the expectation, we invoke the Dominated Convergence Theorem (DCT). Under the local Lipschitz bounding condition in Assumption \ref{ass:iden}(e), the argument inside the expectation is bounded by
  \[
    \left|w(D_i,X_i)\frac{m(D_i+\nu,X_i)-m(D_i,X_i)}{\nu}\right| \le |w(D_i,X_i)| \, L(D_i, X_i).
  \]
  We must verify that this bounding random variable is integrable. Applying H\"older's inequality yields
  \[
    \mathbb{E}\Big[|w(D_i,X_i)| \, L(D_i, X_i)\Big] \le \left(\mathbb{E}\Big[|w(D_i,X_i)|^p\Big]\right)^{1/p} \left(\mathbb{E}\Big[L(D_i, X_i)^q\Big]\right)^{1/q}.
  \]
  Assumption \ref{ass:iden}(e) guarantees that both \(\|w\|_p\) and \(\|L\|_q\) are finite, so that the bounding random variable is integrable (\(\mathbb{E}[|w|L] < \infty\)). Because the sequence of random variables is dominated by an integrable function that does not depend on \(\nu\), the DCT applies, allowing us to pass the limit inside the expectation
  \begin{align*}
    \dot{\tau}_w & = \mathbb{E}\left[w(D_i,X_i) \lim_{\nu\rightarrow 0}\frac{m(D_i+\nu,X_i)-m(D_i,X_i)}{\nu}\right] \\
                 & = \mathbb{E}\Big[w(D_i,X_i) \partial_d m(D_i,X_i)\Big],
  \end{align*}
  where the second equality follows from the continuous differentiability of \(m(\cdot,x)\) established in Assumption \ref{ass:iden}(d).

  Finally, we connect the structural derivative \(\partial_d m\) to the observable regression derivative \(\mu'\). For any observed point \((d_0,x_0)\) in the support of \(P\), Assumption \ref{ass:iden}(c) guarantees \(f(d|x_0)>0\) within an open neighborhood \(\mathcal{N}(d_0)\). For any \(d\in\mathcal{N}(d_0)\), the observable regression function satisfies
  \begin{align*}
    \mu(d,x_0) & = \mathbb{E}[Y_i|D_i=d,X_i=x_0]    \\
               & = \mathbb{E}[Y_i(d)|D_i=d,X_i=x_0] \\
               & = \mathbb{E}[Y_i(d)|X_i=x_0]       \\
               & = m(d,x_0).
  \end{align*}
  Because the observable function \(\mu(d,x)\) is identical to \(m(d,x)\) over an open neighborhood, their partial derivatives are also equal. Substituting this observable equivalence yields
  \[
    \dot{\tau}_w = \mathbb{E}\Big[w(D_i,X_i) \mu'(D_i,X_i)\Big].
  \]
\end{proof}

\begin{proof}[Proof of Proposition~\ref{prp:class}]
  We show two results. First, the estimand defined by a continuous balancing weight \(\alpha\) (Definition~\ref{def:class}) corresponds to a weighted average derivative with the specified weight \(w(d,x)\). Second, this weight is normalized, i.e., \(\E[w(D, X)] = 1\).

  Relating \(\alpha\) and \(w\). We need to verify that for any sufficiently smooth and bounded function \(\mu \in \mathcal{H}\), the following identity holds
  \begin{equation}
    \label{eq:proof_p2_identity}
    \E[\alpha(D, X)\mu(D, X)] = \E[w(D, X)\mu'(D, X)].
  \end{equation}
  We analyze this identity conditionally on \(X=x\). Let \(w(d,x)\) be defined as in Equation~\eqref{eq:weight}
  \[
    w(d, x) = -\frac{F(d|x)}{f(d|x)} \cdot \E[\alpha(D, X) \mid D \leq d, X = x].
  \]
  Let \(H(d, x) = w(d, x)f(d|x)\). We can rewrite \(H(d, x)\) using the definition of conditional expectation
  \begin{align*}
    H(d, x) & = - F(d|x) \cdot \E[\alpha(D, X) \mid D \leq d, X = x]       \\
            & = - \E[\alpha(D, X) \mathbb{I}(D \leq d) \mid X = x]         \\
            & = - \int_{\underline{d}}^{d} \alpha(t, x)f(t|x) \mathrm{d}t.
  \end{align*}

  We evaluate the right-hand side of Equation~\eqref{eq:proof_p2_identity} conditional on \(X=x\) using integration by parts
  \begin{align*}
    \E[w(D, X)\mu'(D, X) \mid X = x] & = \int_{\underline{d}}^{\bar{d}} w(d, x) \mu'(d, x) f(d|x) \mathrm{d}d                                                                     \\
                                     & = \int_{\underline{d}}^{\bar{d}} \mu'(d, x) H(d, x) \mathrm{d}d                                                                            \\
                                     & = [\mu(d, x)H(d, x)]_{\underline{d}}^{\bar{d}} - \int_{\underline{d}}^{\bar{d}} \mu(d, x) \frac{\partial H(d, x)}{\partial d} \mathrm{d}d.
  \end{align*}

  We must verify that the boundary terms vanish. At the lower bound \(\underline{d}\)
  \[
    H(\underline{d}, x) = - \int_{\underline{d}}^{\underline{d}} \alpha(t, x)f(t|x) \mathrm{d}t = 0.
  \]
  At the upper bound \(\bar{d}\)
  \[
    H(\bar{d}, x) = - \int_{\underline{d}}^{\bar{d}} \alpha(t, x)f(t|x) \mathrm{d}t = - \E[\alpha(D, X) \mid X = x].
  \]
  Since \(\alpha\) is a continuous balancing weight, \(\E[\alpha(D, X) \mid X = x] = 0\). Thus, \(H(\bar{d}, x) = 0\). The boundary terms vanish, since \(\mu(\cdot, x)\) has finite limits at the support endpoints.\footnote{If the support is unbounded (e.g., \(\mathbb{R}\)), we assume standard regularity conditions such that \(\lim_{d\to\pm\infty} \mu(d,x)H(d,x) = 0\).}

  Now we analyze the derivative of \(H(d, x)\). By the Fundamental Theorem of Calculus
  \[
    \frac{\partial H(d, x)}{\partial d} = \frac{\partial}{\partial d} \left( - \int_{\underline{d}}^{d} \alpha(t, x)f(t|x) \mathrm{d}t \right) = - \alpha(d, x)f(d|x).
  \]

  Substituting this back into the integration by parts formula
  \begin{align*}
    \E[w(D, X)\mu'(D, X) \mid X = x] & = 0 - \int_{\underline{d}}^{\bar{d}} \mu(d, x) [-\alpha(d, x)f(d|x)] \mathrm{d}d \\
                                     & = \int_{\underline{d}}^{\bar{d}} \mu(d, x) \alpha(d, x)f(d|x) \mathrm{d}d        \\
                                     & = \E[\alpha(D, X)\mu(D, X) \mid X = x].
  \end{align*}
  This confirms the identity in Equation~\eqref{eq:proof_p2_identity}.

  Normalization of \(w(d,x)\). We must show that \(\E[w(D, X)] = 1\). We utilize the identity established in Part 1, which holds for any smooth function \(\mu(d,x)\). We choose the test function \(\mu(d, x) = d\). Then \(\mu'(d, x) = 1\).

  Substituting this into the identity \(\E[w\mu' \mid X] = \E[\alpha\mu \mid X]\)
  \begin{align*}
    \E[w(D, X) \cdot 1 \mid X] & = \E[\alpha(D, X) \cdot D \mid X].
  \end{align*}
  Taking the expectation over \(X\) using the Law of Iterated Expectations
  \begin{align*}
    \E[w(D, X)] & = \E[\E[w(D, X) \mid X]]       \\
                & = \E[\E[\alpha(D, X)D \mid X]] \\
                & = \E[\alpha(D, X)D].
  \end{align*}
  Since \(\alpha\) is a continuous balancing weight, the normalization constraint ensures \(\E[\alpha(D, X)D] = 1\). Therefore, \(\E[w(D, X)] = 1\).
\end{proof}

\begin{proof}[Proof of Lemma~\ref{lem:nonneg}]
  From Equation~\eqref{eq:weight}, since \(F(d|x) \geq 0\) and \(f(d|x) \geq 0\) for all \(d, x\), it is straightforward to show that \(w(d, x) \geq 0\) if and only if
  \[
    \E[\alpha(D, X) \mid D \leq d, X = x] \leq 0 \ \text{for all } d, x.
  \]
\end{proof}

\begin{proof}[Proof of Proposition~\ref{prp:id_min}]
  We work conditionally on \(X = x\) and suppress \(x\) where convenient. By consistency (Assumption \ref{ass:id_min}(a)) and mean independence (Assumption \ref{ass:id_min}(b)), for \(P\)-almost every observed \((d,x)\),
  \[
    \mu(d,x) = \E[Y \mid D=d, X=x] = \E[Y(d) \mid D=d, X=x] = \E[Y(d) \mid X=x] = m(d,x).
  \]
  Hence \(\tau_\alpha = \E[\alpha(D,X) Y] = \E[\alpha(D,X)\mu(D,X)] = \E[\alpha(D,X) m(D,X)]\), where the outer expectation is over the observed law of \((D,X)\); this step uses no positivity of \(f(d|x)\).

  Fix \(x\) and write the inner conditional expectation as a Lebesgue--Stieltjes integral against the conditional distribution \(F(\cdot|x)\) of \(D\) given \(X=x\), supported on \([\underline{d}, \overline{d}]\),
  \[
    \E[\alpha(D,X) m(D,X) \mid X = x] = \int_{[\underline{d}, \overline{d}]} \alpha(d,x) m(d,x)\, \mathrm{d}F(d|x).
  \]
  By definition, \(W_\alpha(d,x) = -\E[\alpha(D,X)\mathbb{I}(D \le d) \mid X=x] = -\int_{[\underline{d}, d]} \alpha(t,x)\, \mathrm{d}F(t|x)\), which is precisely the cumulative object \(H(d,x)\) constructed in the proof of Proposition \ref{prp:class}. It is right-continuous and of bounded variation, with total variation \(\E[|\alpha(D,X)| \mid X=x] < \infty\) (as \(\alpha(\cdot,x) \in L_1(P_{D|X})\) for almost every \(x\)), and satisfies \(\mathrm{d}(-W_\alpha)(d,x) = \alpha(d,x)\,\mathrm{d}F(d|x)\) as signed measures. The proof of Proposition \ref{prp:class} carries out the corresponding integration by parts for the observed regression \(\mu\) under a density; we reprise it here for \(m\) in Lebesgue--Stieltjes form, which additionally accommodates atoms in the treatment distribution. Therefore
  \[
    \int_{[\underline{d}, \overline{d}]} \alpha(d,x) m(d,x)\, \mathrm{d}F(d|x) = \int_{[\underline{d}, \overline{d}]} m(d,x)\, \mathrm{d}(-W_\alpha)(d,x).
  \]
  We apply integration by parts for Lebesgue--Stieltjes integrals. Because \(m(\cdot,x)\) is absolutely continuous, hence continuous, by Assumption \ref{ass:id_min}(c), and \(W_\alpha(\cdot,x)\) is of bounded variation with jumps only at atoms of \(D\), the two share no common points of discontinuity, so
  \[
    \int_{[\underline{d}, \overline{d}]} m\, \mathrm{d}(-W_\alpha) = \big[-m(d,x) W_\alpha(d,x)\big]_{\underline{d}}^{\overline{d}} + \int_{[\underline{d}, \overline{d}]} W_\alpha(d,x)\, \mathrm{d}m(d,x).
  \]
  The boundary term vanishes: at the lower endpoint \(W_\alpha(\underline{d}^-, x) = 0\), and at the upper endpoint \(W_\alpha(\overline{d}, x) = -\E[\alpha(D,X) \mid X=x] = 0\) by the covariate balance constraint of Definition \ref{def:class}. The endpoint values of \(m\) are finite by Assumption \ref{ass:id_min}(c), so \([-m W_\alpha]_{\underline{d}}^{\overline{d}} = 0\). Because \(m(\cdot,x)\) is absolutely continuous by Assumption \ref{ass:id_min}(c), \(\mathrm{d}m(d,x) = \partial_d m(d,x)\,\mathrm{d}d\), giving
  \[
    \E[\alpha(D,X) m(D,X) \mid X = x] = \int_{\underline{d}(x)}^{\overline{d}(x)} \frac{\partial m(d,x)}{\partial d}\, W_\alpha(d,x)\, \mathrm{d}d.
  \]
  Taking the expectation over \(X\) and interchanging it with the inner integral---justified by Fubini's theorem, since the integrand \(\partial_d m(d,X)\, W_\alpha(d,X)\) is dominated by \(L(d,X)|W_\alpha(d,X)|\), which is integrable by the dominated-derivative condition Assumption \ref{ass:id_min}(d)---yields Equation \eqref{eq:id_repr}.

  Finally, for the optimal weights \(\alpha^*_\omega\) of Theorem \ref{thm:optimal_weights}, the implied weight satisfies \(W_{\alpha^*_\omega}(d,x) = -F(d|x)\,\E[\alpha^*_\omega(D,X) \mid D \le d, X=x] \ge 0\), since the truncated conditional expectation is non-positive by Lemma \ref{lem:nonneg} (verified for all three optimal estimands in Remark 2). These weights exist and lie in \(L_2(P_{D,X})\) under Assumption \ref{ass:regularity_eif}, so they satisfy the hypotheses of the proposition.
\end{proof}

\begin{proof}[Proof of Theorem \ref{thm:optimal_weights}]
  We seek the optimal continuous balancing weight \(\alpha_\omega\) that minimizes the variance component \(V_S = \E[\alpha_\omega^2(D_i,X_i)\sigma^2(D_i,X_i)]\). Let \(\omega(D_i,X_i) = \sigma^{-2}(D_i, X_i)\) denote the true precision weight. The objective simplifies to minimizing \(\E[\alpha_\omega^2(D_i,X_i)\omega^{-1}(D_i,X_i)]\). Let \(\mathcal{H} = L_2(P_{D,X})\). Under Assumption \ref{ass:regularity_eif}, \(\omega(D_i,X_i)\) is strictly positive and bounded, ensuring the objective is well-defined. We formulate this as a constrained functional optimization problem in the Hilbert space \(\mathcal{H}\),
  \begin{align*}
    \min_{\alpha \in \mathcal{H}} \quad & \E\big[\alpha^2(D,X) \omega^{-1}(D,X)\big]             \\
    \text{subject to} \quad             & (C1):\E[\alpha(D,X)D] = 1,                             \\
                                        & (C2):\E[\alpha(D,X) \mid X] = 0 \text{ almost surely.}
  \end{align*}
  We employ the method of Lagrange multipliers. Let \(\lambda \in \mathbb{R}\) be the multiplier for constraint (C1), and let \(\eta(X) \in L_2(P_X)\) be the functional multiplier for (C2). The Lagrangian is
  \[
    L(\alpha, \lambda, \eta) = \E\big[\alpha^2(D,X)\omega^{-1}(D,X) - 2\lambda\alpha(D,X)D - 2\eta(X)\alpha(D,X)\big] + 2\lambda.
  \]
  To find the optimum, we compute the G\^{a}teaux derivative of \(L\) with respect to \(\alpha\) in an arbitrary direction \(h \in \mathcal{H}\) and set it to zero
  \begin{align*}
    \nabla_\alpha L(\alpha; h) & = \lim_{\epsilon \rightarrow 0} \frac{L(\alpha + \epsilon h, \lambda, \eta) - L(\alpha, \lambda, \eta)}{\epsilon}                                              \\
                               & = \frac{d}{d\epsilon} \E\big[(\alpha + \epsilon h)^2 \omega^{-1}(D,X) - 2\lambda(\alpha + \epsilon h)D - 2\eta(X)(\alpha + \epsilon h)\big] \Big|_{\epsilon=0} \\
                               & = 2\E\big[h(D,X)\big(\alpha(D,X)\omega^{-1}(D,X) - \lambda D - \eta(X)\big)\big] = 0.
  \end{align*}
  Because the derivative must equal zero for all arbitrary test functions \(h \in \mathcal{H}\), the term inside the parenthesis must be zero almost surely. This establishes the first-order condition
  \[
    \alpha(D,X) = \omega(D,X)\big(\lambda D + \eta(X)\big).
  \]

  We apply the continuous covariate balance constraint (C2), \(\E[\alpha(D,X) \mid X] = 0\), to solve for \(\eta(X)\)
  \[
    \E\big[\omega(D,X)(\lambda D + \eta(X)) \mid X \big] = 0 \implies \lambda \E[\omega(D,X)D \mid X] + \eta(X)\E[\omega(D,X) \mid X] = 0.
  \]
  Solving for \(\eta(X)\) directly yields the precision-weighted centering term
  \[
    \eta(X) = -\lambda \frac{\E[\omega(D,X)D \mid X]}{\E[\omega(D,X) \mid X]} \equiv -\lambda e_\omega(X).
  \]
  Substituting \(\eta(X)\) back into the expression for \(\alpha(D,X)\) provides the unnormalized optimal weight
  \[
    \alpha(D,X) = \lambda \omega(D,X)\big(D - e_\omega(X)\big).
  \]

  Finally, we apply the normalization constraint (C1), \(\E[\alpha(D,X)D] = 1\), to uniquely identify the scalar multiplier \(\lambda\)
  \[
    1 = \lambda \E\big[ \omega(D,X)\big(D - e_\omega(X)\big)D \big].
  \]
  We decompose \(D\) as \(D = (D - e_\omega(X)) + e_\omega(X)\), separating the expectation
  \[
    1 = \lambda \E\big[ \omega(D,X)\big(D - e_\omega(X)\big)^2 \big] + \lambda \E\big[ \omega(D,X)\big(D - e_\omega(X)\big)e_\omega(X) \big].
  \]
  By the Law of Iterated Expectations, the rightmost term evaluates exactly to zero because
  \begin{align*}
      & \E\big[ \omega(D,X)(D - e_\omega(X))e_\omega(X) \big]                                               \\
    = & \E_X\Big[ e_\omega(X) \E[ \omega(D,X)D - \omega(D,X)e_\omega(X) \mid X ] \Big]                      \\
    = & \E_X\Big[ e_\omega(X) \big( \E[\omega(D,X)D \mid X] - e_\omega(X)\E[\omega(D,X) \mid X] \big) \Big] \\
    = & 0,
  \end{align*}
  which follows from the definition of \(e_\omega(X)\). This leaves the scalar multiplier
  \[
    \lambda = \left( \E\big[ \omega(D,X)\big(D - e_\omega(X)\big)^2 \big] \right)^{-1}.
  \]
  Substituting \(\lambda\) completes the derivation of the unified formula for \(\alpha^*_\omega(d,x)\). The optimal estimand \(\tau^*_\omega = \E[\alpha^*_\omega(D, X) Y]\) follows trivially. The specific forms in cases (a), (b), and (c) arise by defining \(\omega(d,x)\) as \(\sigma^{-2}(d,x)\), \(\sigma^{-2}(x)\), and \(1\), respectively.
\end{proof}

\begin{proof}[Proof of Theorem \ref{thm:eif_optimal}]
  We use the method of pathwise differentiation. Let \(P_\epsilon\) be a regular one-dimensional parametric submodel passing through the true distribution \(P_0\) at \(\epsilon=0\), with score function \(S(O) = \frac{\partial}{\partial \epsilon} \log p_\epsilon(O) \big|_{\epsilon=0}\). Assumption \ref{ass:regularity_eif} ensures all required moments are finite.

  The estimand is \(\tau^*_\omega = N_\omega / K_\omega\), where the numerator is
  \[
    N_\omega = \E[\omega(D, X)(D - e_\omega(X))(\mu(D,X) - \rho_\omega(X))]
  \]
  and the denominator is
  \[
    K_\omega = \E[\omega(D, X)(D - e_\omega(X))^2]
  \]
  By the functional Delta method, the EIF of the ratio is \(\phi^*(O) = \frac{1}{K_\omega}(\phi_N(O) - \tau^*_\omega \phi_K(O))\), where \(\phi_N\) and \(\phi_K\) are the EIFs of the numerator and denominator.

  We calculate the pathwise derivative of the numerator \(N_\omega\) evaluated at \(\epsilon=0\), denoted with a tilde
  \begin{align}
    \tilde{N}_\omega & = \E\big[ \tilde{\omega}(D - e_\omega)(\mu - \rho_\omega) - \omega \tilde{e}_\omega (\mu - \rho_\omega) - \omega (D - e_\omega) \tilde{\rho}_\omega + \omega(D - e_\omega)\tilde{\mu} \nonumber \\
                     & \quad + \omega(D - e_\omega)(\mu - \rho_\omega)S(O) \big]. \label{eq:pathwise_num}
  \end{align}
  Because \(e_\omega\) and \(\rho_\omega\) are defined via orthogonal projections, \(\E[\omega(D - e_\omega) \mid X] = 0\) and \(\E[\omega(\mu - \rho_\omega) \mid X] = 0\). By the law of iterated expectations, any cross-terms involving \(\tilde{e}_\omega\) or \(\tilde{\rho}_\omega\) vanish
  \[
    \E[\omega \tilde{e}_\omega (\mu - \rho_\omega)] = \E_X\big[ \tilde{e}_\omega(X) \E[\omega(\mu - \rho_\omega) \mid X] \big] = 0, \quad \text{and} \quad \E[\omega (D - e_\omega) \tilde{\rho}_\omega] = 0.
  \]
  The pathwise derivative of the conditional mean is \(\tilde{\mu}(D,X) = \E[(Y - \mu)S(O) \mid D,X]\). Thus,
  \[
    \E[\omega(D - e_\omega)\tilde{\mu}] = \E\big[\omega(D - e_\omega)(Y - \mu)S(O)\big].
  \]

  We now evaluate the pathwise derivative of the precision weight \(\tilde{\omega}\). Since \(\omega(D,X) = 1/\sigma^2(D,X)\) where \(\sigma^2(D,X) = \E[(Y - \mu)^2 \mid D,X]\), the chain rule yields
  \[
    \tilde{\omega} = -\frac{1}{\sigma^4}\tilde{\sigma}^2 = -\omega^2 \E\big[((Y - \mu)^2 - \sigma^2)S(O) \mid D,X\big].
  \]
  Substituting this into the first term of Equation \eqref{eq:pathwise_num}
  \[
    \E\big[\tilde{\omega}(D - e_\omega)(\mu - \rho_\omega)\big] = -\E\big[\omega^2((Y - \mu)^2 - \sigma^2)(D - e_\omega)(\mu - \rho_\omega)S(O)\big].
  \]
  Collecting all non-zero terms in \(\tilde{N}_\omega\), we identify the influence function for the numerator, \(\phi_N(O)\), such that \(\tilde{N}_\omega = \E[\phi_N(O)S(O)]\)
  \begin{align*}
    \phi_N(O) & = \big[\omega - \omega^2((Y - \mu)^2 - \sigma^2)\big](D - e_\omega)(\mu - \rho_\omega) + \omega(D - e_\omega)(Y - \mu) - N_\omega \\
              & = \big[2\omega - \omega^2(Y - \mu)^2\big](D - e_\omega)(\mu - \rho_\omega) + \omega(D - e_\omega)(Y - \mu) - N_\omega             \\
              & \equiv A_{\mathrm{aug}}(O; \eta_0) - N_\omega,
  \end{align*}
  where we define \(\omega_{\mathrm{db}}(O) = 2\omega - \omega^2(Y - \mu)^2\).

  An identical pathwise differentiation applied to the denominator \(K_\omega\) yields its influence function. The projection terms identically vanish, leaving
  \[
    \phi_K(O) = \big[\omega - \omega^2((Y - \mu)^2 - \sigma^2)\big](D - e_\omega)^2 - K_\omega = \omega_{\mathrm{db}}(D - e_\omega)^2 - K_\omega \equiv B_{\mathrm{aug}}(O; \eta_0) - K_\omega.
  \]

  Finally, combining the components yields the overall unscaled EIF as defined in \(\psi_{\mathrm{aug}}\).
\end{proof}

\begin{proof}[Proof of Theorem \ref{thm:dml_asymptotics}]
  Let \(P\) denote the true probability measure. To account for cross-fitting, let \(\mathcal{I}_k\) denote the observation indices in fold \(k\), let \(P_{n,k}\) denote the empirical measure over fold \(k\) (i.e., \(P_{n,k}[g] = |\mathcal{I}_k|^{-1} \sum_{i \in \mathcal{I}_k} g(O_i)\)), and let \(P_n = \sum_{k=1}^K \frac{|\mathcal{I}_k|}{n} P_{n,k}\) denote the full empirical measure. Let \(\hat{\eta}_k = (\hat{\omega}_k, \hat{\mu}_k, \hat{e}_k, \hat{\rho}_k)\) denote the nuisance estimators trained on the complement \(\mathcal{I}_k^c\). For notational simplicity, we drop the \(\omega\) subscripts on \(e\) and \(\rho\), denote the true parameters as \(\eta_0\), use \(\|\cdot\|_{P,2}\) for the \(L_2(P)\) norm, and use \(\|\cdot\|_\infty\) for the uniform supremum norm. By Assumption \ref{ass:regularity_eif}(a) and algorithmic trimming, the precision weights \(\omega_0\) and \(\hat{\omega}_k\) are bounded such that \(\|\omega_0\|_\infty \le C\), \(\|1/\omega_0\|_\infty \le C\), and \(\|\hat{\omega}_k\|_\infty \le C\).

  We define the moment functions for the numerator and denominator, corresponding to the components of the augmented score \(\psi_{\mathrm{aug}}\), as
  \begin{align*}
    m_A(O; \eta) & = \omega_{\mathrm{db}}(O; \eta)(D - e(X))(\mu(D,X) - \rho(X)) + \omega(D,X)(D - e(X))(Y - \mu(D,X)), \\
    m_B(O; \eta) & = \omega_{\mathrm{db}}(O; \eta)(D - e(X))^2,
  \end{align*}
  where \(\omega_{\mathrm{db}}(O; \eta) = 2\omega(D,X) - \omega^2(D,X)(Y - \mu(D,X))^2\).

  The population parameters are \(A_0 = P[m_A(O; \eta_0)]\) and \(B_0 = P[m_B(O; \eta_0)]\). Note \(B_0 = K_{\omega_0}\) and the optimal estimand is \(\tau_{\omega}^* = A_0/B_0\). The cross-fitted estimators are \(\hat{A} = P_n[m_A(O; \hat{\eta}_k)]\) and \(\hat{B} = P_n[m_B(O; \hat{\eta}_k)]\).

  The true influence functions are \(\phi_A(O) = m_A(O; \eta_0) - A_0\) and \(\phi_B(O) = m_B(O; \eta_0) - B_0\). We decompose the estimation error for the denominator \(\hat{B}\) as
  \begin{align*}
    \hat{B} - B_0 & = \sum_{k=1}^K \frac{|\mathcal{I}_k|}{n} (P_{n,k} - P)[m_B(O; \hat{\eta}_k) - m_B(O; \eta_0)]            \\
                  & \quad + \sum_{k=1}^K \frac{|\mathcal{I}_k|}{n} P[m_B(O; \hat{\eta}_k) - m_B(O; \eta_0)] + P_n[\phi_B(O)] \\
                  & \equiv E_n^B + R_n^B + P_n[\phi_B(O)],
  \end{align*}
  where \(E_n^B\) is the empirical process term and \(R_n^B\) is the remainder bias term. An analogous decomposition \(\hat{A} - A_0 = E_n^A + R_n^A + P_n[\phi_A(O)]\) holds for the numerator. We must show that the remainders and empirical process terms are \(o_P(n^{-1/2})\) under either Condition A or Condition B.

  The Remainder Term \(R_n^B\). We analyze the fold-specific remainder \(R_{n,k}^B = P[m_B(O; \hat{\eta}_k) - m_B(O; \eta_0)]\) using the Law of Iterated Expectations conditional on \((D,X)\). Let \(\Delta\omega = \hat{\omega}_k - \omega_0\), \(\Delta\mu = \hat{\mu}_k - \mu_0\), \(\Delta e = \hat{e}_k - e_0\), and \(\Delta\rho = \hat{\rho}_k - \rho_0\).

  Because \(\E[(Y - \hat{\mu}_k)^2 \mid D,X] = \sigma_0^2 + (\mu_0 - \hat{\mu}_k)^2 = 1/\omega_0 + \Delta\mu^2\), the conditional expectation of the debiased weight evaluates to
  \begin{align*}
    \E[\omega_{\mathrm{db}}(O; \hat{\eta}_k) \mid D,X] & = 2\hat{\omega}_k - \hat{\omega}_k^2\left(\frac{1}{\omega_0} + \Delta\mu^2\right)                                                 \\
                                                       & = 2(\omega_0 + \Delta\omega) - \frac{\omega_0^2 + 2\omega_0\Delta\omega + \Delta\omega^2}{\omega_0} - \hat{\omega}_k^2\Delta\mu^2 \\
                                                       & = \omega_0 - \left(\frac{\Delta\omega^2}{\omega_0} + \hat{\omega}_k^2\Delta\mu^2\right) \equiv \omega_0 - \epsilon_{\mathrm{db}},
  \end{align*}
  where \(\epsilon_{\mathrm{db}}\) captures the second-order estimation errors of the precision weight. Substituting this into the conditional expectation of \(m_B(O; \hat{\eta}_k)\) yields
  \[
    \E[m_B(O; \hat{\eta}_k) \mid D,X] = \omega_0(D - e_0)^2 - 2\omega_0(D - e_0)\Delta e + \omega_0\Delta e^2 - \epsilon_{\mathrm{db}}(D - \hat{e}_k)^2.
  \]
  The leading term recovers \(\E[m_B(O; \eta_0) \mid D,X]\). Taking the unconditional expectation, the first-order error vanishes because \(P[-2\omega_0(D - e_0)\Delta e] = -2P[\Delta e \E[\omega_0(D - e_0) \mid X]] = 0\), which confirms Neyman orthogonality. The remainder simplifies to
  \[
    R_{n,k}^B = P[\omega_0\Delta e^2 - \epsilon_{\mathrm{db}}(D - \hat{e}_k)^2].
  \]

  Under Condition A, because all variables and estimators are bounded (\(\|\cdot\|_\infty \le C\)), applying the Cauchy-Schwarz inequality yields \(|R_{n,k}^B| \le C\|\Delta e\|_{P,2}^2 + C\|\Delta\omega\|_{P,2}^2 + C\|\Delta\mu\|_{P,2}^2 = o_P(n^{-1/2})\).

  Under Condition B, the projection error satisfies \(|P[\omega_0\Delta e^2]| \le \|\omega_0\|_\infty \|\Delta e\|_{P,2}^2 = o_P(n^{-1/2})\). Notice that \((D - \hat{e}_k) = (D - e_0) - \Delta e\). Thus,
  \[
    (D - \hat{e}_k)^2 \le 2(D - e_0)^2 + 2\Delta e^2 \le 2v(O)^2 + 2\Delta e^2.
  \]
  Evaluating the \(\Delta\mu^2\) component of \(\epsilon_{\mathrm{db}}\)
  \begin{align*}
    |P[\hat{\omega}_k^2\Delta\mu^2(D - \hat{e}_k)^2]| & \le 2\|\hat{\omega}_k\|_\infty^2 \left( P[\Delta\mu^2 v(O)^2] + P[\Delta\mu^2 \Delta e^2] \right)                               \\
                                                      & \le 2\|\hat{\omega}_k\|_\infty^2 \left( \|\Delta\mu \cdot v(O)\|_{P,2}^2 + \|\Delta\mu\|_\infty^2 \|\Delta e\|_{P,2}^2 \right).
  \end{align*}
  Condition B(iii) guarantees \(\|\Delta\mu \cdot v(O)\|_{P,2}^2 = o_P(n^{-1/2})\). Because Condition B(ii) guarantees \(\|\Delta\mu\|_\infty = o_P(1)\), the second term is \(o_P(1) \cdot o_P(n^{-1/2}) = o_P(n^{-1/2})\). The \(\Delta\omega^2/\omega_0\) component follows identically. Thus, \(R_n^B = o_P(n^{-1/2})\).

  The Remainder Term \(R_n^A\). Conditioning on \((D,X)\) for the numerator \(m_A(O; \hat{\eta}_k)\) and applying the internal cancellation \(\hat{\mu}_k - \hat{\rho}_k - \Delta\mu = \mu_0 - \rho_0 - \Delta\rho\), the first-order errors again vanish by orthogonality: \(P[-\omega_0(D - e_0)\Delta\rho] = 0\) and \(P[-\omega_0\Delta e(\mu_0 - \rho_0)] = 0\). The remainder evaluates to
  \[
    R_{n,k}^A = P[\omega_0\Delta e\Delta\rho - \Delta\omega(D - \hat{e}_k)\Delta\mu - \epsilon_{\mathrm{db}}(D - \hat{e}_k)(\hat{\mu}_k - \hat{\rho}_k)].
  \]
  Under Condition A, utilizing uniform boundedness, the Cauchy-Schwarz inequality yields bounds of the form \(C\|\Delta e\|_{P,2}\|\Delta\rho\|_{P,2}\), which scale at \(o_P(n^{-1/2})\).

  Under Condition B, for the first term: \(|P[\omega_0\Delta e\Delta\rho]| \le \|\omega_0\|_\infty \|\Delta e\|_{P,2} \|\Delta\rho\|_{P,2} = o_P(n^{-1/2})\).

  For the cross-term, we decompose \((D-\hat{e}_k) = (D-e_0) - \Delta e\). By the triangle inequality, \(|D-\hat{e}_k| \le |D-e_0| + |\Delta e| \le v(O) + |\Delta e|\). Applying the triangle and Cauchy-Schwarz inequalities,
  \begin{align*}
    |P[\Delta\omega(D - \hat{e}_k)\Delta\mu]| & \le P[|\Delta\omega| \cdot v(O) \cdot |\Delta\mu|] + P[|\Delta\omega| \cdot |\Delta e| \cdot |\Delta\mu|]                   \\
                                              & \le \|\Delta\omega \cdot v(O)\|_{P,2} \|\Delta\mu\|_{P,2} + \|\Delta\omega\|_\infty \|\Delta e\|_{P,2} \|\Delta\mu\|_{P,2}.
  \end{align*}
  Because \(v(O) \ge 1\), the unweighted rate is bounded by \(\|\Delta\mu\|_{P,2} \le \|\Delta\mu \cdot v(O)\|_{P,2}\). Thus, the first term is \(o_P(n^{-1/4}) o_P(n^{-1/4}) = o_P(n^{-1/2})\). The second term is \(o_P(1) \cdot o_P(n^{-1/4}) \cdot o_P(n^{-1/4}) = o_P(n^{-1/2})\).

  For the \(\epsilon_{\mathrm{db}}\) components, we apply the triangle inequality to the product
  \[
    |(D-\hat{e}_k)(\hat{\mu}_k - \hat{\rho}_k)| \le (v(O) + |\Delta e|)(v(O) + |\Delta\mu| + |\Delta\rho|).
  \]
  We can extract the uniform estimation errors outside the expectation to rigorously bound the \(\Delta\mu^2\) component unconditionally. Define the uniform bound \(C_n = (1 + \|\Delta e\|_\infty)(1 + \|\Delta\mu\|_\infty + \|\Delta\rho\|_\infty) = \mathcal{O}_P(1)\). Because \(v(O) \ge 1\), the product is bounded by \(C_n v(O)^2\). Thus,
  \[
    |P[\hat{\omega}_k^2 \Delta\mu^2(D - \hat{e}_k)(\hat{\mu}_k - \hat{\rho}_k)]| \le \|\hat{\omega}_k\|_\infty^2 C_n P[\Delta\mu^2 v(O)^2] = \mathcal{O}_P(1) \|\Delta\mu \cdot v(O)\|_{P,2}^2 = o_P(n^{-1/2}).
  \]
  The \(\Delta\omega^2/\omega_0\) component follows identically. Thus, \(R_n^A = o_P(n^{-1/2})\).

  The Empirical Process Terms \(E_n^A\) and \(E_n^B\). Because the nuisance estimators \(\hat{\eta}_k\) are computed on the independent sample \(\mathcal{I}_k^c\), they act as conditionally centered empirical processes. By Lemma 2 of \citet{kennedy2020sharp}, these processes scale as \(\mathcal{O}_P(n^{-1/2} \|m_J(O; \hat{\eta}_k) - m_J(O; \eta_0)\|_{P,2})\) for \(J \in \{A, B\}\). We must verify that the \(L_2(P)\) norm of the estimated score difference is \(o_P(1)\).

  Under Condition A, because all variables and estimators are uniformly bounded, the mapping \(\eta \mapsto m_J(O; \eta)\) has bounded partial derivatives and is therefore globally Lipschitz continuous. Minkowski's inequality ensures \(\|m_J(O; \hat{\eta}_k) - m_J(O; \eta_0)\|_{P,2} \le L \big( \|\Delta\omega\|_{P,2} + \|\Delta\mu\|_{P,2} + \|\Delta e\|_{P,2} + \|\Delta\rho\|_{P,2} \big) = o_P(1)\), where \(L > 0\) is the Lipschitz constant.

  Under Condition B, the leading first-order discrepancy term in \(m_B(O; \eta)\) is proportional to \(\Delta\omega(Y - \mu_0)^2(D - e_0)^2\). We bound its \(L_2(P)\) norm as
  \begin{align*}
    \|\Delta\omega(Y - \mu_0)^2(D - e_0)^2\|_{P,2} & \le \|\Delta\omega\|_\infty \left( \E[(Y - \mu_0)^4(D - e_0)^4] \right)^{1/2}       \\
                                                   & \le \|\Delta\omega\|_\infty \left( \E[(Y - \mu_0)^8] \E[(D - e_0)^8] \right)^{1/4}.
  \end{align*}

  Because Condition B(i) guarantees finite moments up to \(q \ge 8\), and true conditional expectations inherit bounds from the data via Jensen's inequality (e.g., \(\E[\mu_0(X)^8] \le \E[Y^8]\)), these structural expectations are finite constants. Because the uniform estimation error converges as \(\|\Delta\omega\|_\infty = o_P(1)\), the entire product vanishes to \(o_P(1)\).

  Quadratic estimation error terms are handled identically
  \[
    \|\hat{\omega}_k^2 \Delta\mu^2 (D - e_0)^2\|_{P,2} \le \|\hat{\omega}_k\|_\infty^2 \|\Delta\mu\|_\infty^2 \left( \E[(D - e_0)^4] \right)^{1/2} = \mathcal{O}_P(1) \cdot o_P(1) \cdot \mathcal{O}(1) = o_P(1).
  \]
  Identical logic applies to all remaining lower-order terms in \(m_A\) and \(m_B\). Consequently, under either condition, \(E_n^A = o_P(n^{-1/2})\) and \(E_n^B = o_P(n^{-1/2})\).

  Asymptotic Normality via the Delta Method. We have established that both \(\hat{A}\) and \(\hat{B}\) are RAL estimators for \(A_0\) and \(B_0\)
  \begin{align*}
    \hat{A} - A_0 & = P_n[\phi_A(O)] + o_P(n^{-1/2}), \\
    \hat{B} - B_0 & = P_n[\phi_B(O)] + o_P(n^{-1/2}).
  \end{align*}
  By Assumption \ref{ass:regularity_eif}(b), the global variance is non-degenerate, \(B_0 > 0\). Applying the functional Delta method to the ratio \(\hat{\tau}_{\omega}^* = \hat{A}/\hat{B}\) yields
  \begin{align*}
    \hat{\tau}_{\omega}^* - \tau_{\omega}^* & = \frac{1}{B_0}(\hat{A} - A_0) - \frac{A_0}{B_0^2}(\hat{B} - B_0) + o_P(n^{-1/2})   \\
                                            & = P_n\left[\frac{\phi_A(O) - \tau_{\omega}^*\phi_B(O)}{B_0}\right] + o_P(n^{-1/2}).
  \end{align*}
  Substituting the exact expressions for the parameter influence functions
  \begin{align*}
    \phi_A(O) - \tau_{\omega}^*\phi_B(O) & = (m_A(O; \eta_0) - A_0) - \tau_{\omega}^*(m_B(O; \eta_0) - B_0)               \\
                                         & = m_A(O; \eta_0) - \tau_{\omega}^*m_B(O; \eta_0) - (A_0 - \tau_{\omega}^*B_0).
  \end{align*}
  Since \(\tau_{\omega}^* = A_0/B_0\), the constant term exactly cancels. Recalling the unscaled augmented score \(\psi_{\mathrm{aug}}\) from Theorem \ref{thm:eif_optimal}, the scaled influence function identically simplifies to \(B_0^{-1}\psi_{\mathrm{aug}}(O; \tau_{\omega}^*, \eta_0)\). By the Central Limit Theorem, \(\sqrt{n}(\hat{\tau}_{\omega}^* - \tau_{\omega}^*) \xrightarrow{d} \mathcal{N}(0, V_{\omega}^*)\). This completes the proof.
\end{proof}

\newpage
\section{Appendix: Extra Results}\label{sec:appendb}

\subsection{Generalization to Categorical Treatments}

We now demonstrate how our continuous balancing weights framework incorporates existing results for categorical treatments \(D_i \in \{1,\dots,J\}\) as special cases. In this discrete setting, the analytical focus shifts from a continuous derivative to pairwise contrasts defined relative to a target population specified by a marginal tilting function \(w(X_i)\), normalized such that \(\E[w(X_i)] = 1\).

Let \(e_j(x) = \mathbb{P}(D_i=j|X_i=x)\) be the generalized propensity score and let \(\sigma^2(j,x) = \Var(Y_i \mid D_i=j, X_i=x)\) denote the conditional outcome variance. The weighted pairwise contrast between treatment \(j\) and \(k\) is
\[
  \tau_{jk}(w) = \E[w(X_i)(\mu(j, X_i) - \mu(k, X_i))].
\]

The discrete Riesz Representer (balancing weight) for \(\tau_{jk}(w)\) that satisfies the covariate balance constraint from Definition \ref{def:class} is
\[
  \alpha_{jk}(D_i,X_i; w) = w(X_i) \left( \frac{\mathbb{I}(D_i=j)}{e_j(X_i)} - \frac{\mathbb{I}(D_i=k)}{e_k(X_i)} \right).
\]

Following Equation \eqref{eq:vs}, the corresponding conditional efficiency bound \(V_{S, jk}(w)\) is
\[
  \E[\alpha_{jk}^2(D_i,X_i; w)\sigma^2(D_i,X_i)].
\]
We evaluate this expectation by taking the iterated expectation over the discrete support of \(D_i\)
\begin{align*}
  V_{S,jk}(w) & = \E \left[ \sum_{l=1}^J e_l(X_i) \alpha_{jk}^2(l, X_i; w) \sigma^2(l,X_i) \right]                                                    \\
              & = \E \left[ w^2(X_i) \left( e_j(X_i) \frac{\sigma^2(j,X_i)}{e_j^2(X_i)} + e_k(X_i) \frac{\sigma^2(k,X_i)}{e_k^2(X_i)} \right) \right] \\
              & = \E \left[ w^2(X_i) \left( \frac{\sigma^2(j,X_i)}{e_j(X_i)} + \frac{\sigma^2(k,X_i)}{e_k(X_i)} \right) \right].
\end{align*}

\paragraph{With Multiple Treatments (\(J \ge 3\)).} When comparing multiple nominal treatments, \citet{li2019propensity} proposed selecting the target population by minimizing the total asymptotic variance of all possible pairwise contrasts. We adopt this criterion, minimizing this total variance subject to \(\E[w(X_i)] = 1\).
\[
  \sum_{1 \le k < j \le J} V_{S, jk}(w) = \E \left[ w^2(X_i) \sum_{1 \le k < j \le J} \left( \frac{\sigma^2(j,X_i)}{e_j(X_i)} + \frac{\sigma^2(k,X_i)}{e_k(X_i)} \right) \right].
\]
In the summation over all unique pairs, each specific term \(\sigma^2(l,X_i)/e_l(X_i)\) appears exactly \(J-1\) times. Thus, the objective simplifies to
\[
  \sum_{1 \le k < j \le J} V_{S, jk}(w) = (J-1) \E \left[ w^2(X_i) \sum_{j=1}^J \frac{\sigma^2(j,X_i)}{e_j(X_i)} \right].
\]
Let \(\kappa(X_i) = \sum_{j=1}^J \sigma^2(j,X_i)/e_j(X_i)\). We seek to minimize \(\E[w^2(X_i)\kappa(X_i)]\) subject to \(\E[w(X_i)] = 1\). By the Cauchy-Schwarz inequality or a standard Lagrangian approach, the optimal variance-minimizing weight is inversely proportional to \(\kappa(x)\)
\[
  w^*(x) \propto \left( \sum_{j=1}^J \frac{\sigma^2(j,x)}{e_j(x)} \right)^{-1}.
\]
Under homoskedasticity (\(\sigma^2(j,x)=\sigma^2\)), the outcome variance terms cancel out, and the optimal weight becomes the harmonic mean of the generalized propensity scores, \(w^*(x) \propto \left( \sum_{j=1}^J 1/e_j(x) \right)^{-1}\), which perfectly recovers the Generalized Overlap Weights of \citet{li2019propensity}.

\paragraph{With Binary Treatment (\(J = 2\)).} If \(D_i \in \{0, 1\}\), there is only one relevant contrast, \(\tau_{10}\). The optimization criterion naturally reduces to minimizing \(V_{S, 10}(w)\). Following the derivation above, the optimal target weight is
\[
  w^*(x) \propto \left( \frac{\sigma^2(1,x)}{e_1(x)} + \frac{\sigma^2(0,x)}{e_0(x)} \right)^{-1}.
\]
Under homoskedasticity, this simplifies to \(w^*(x) \propto (1/e_1(x) + 1/e_0(x))^{-1} = e_1(x)e_0(x)\). Since \(e_1(x) + e_0(x) = 1\), this algebraic equivalence exactly recovers the optimal overlap weights formally established by \citet{crump2006moving} and \citet{li2018balancing}.

\subsection{Generalization of Theorem 5.1 of Crump et al. (2006)}

When estimating an average derivative effect \(\tau_w\) based on a pre-specified derivative weight \(w(d,x)\), the asymptotic variance of an efficient estimator \(V_\phi\) generally decomposes as \(V_\phi = V_\psi + V_{\mathrm{adj}} + 2C_{\psi, \mathrm{adj}}\). Here, \(V_\psi\) is the variance assuming the true weights are known, \(V_{\mathrm{adj}}\) is the variance inflation from estimating the unknown nuisance parameters within the weights, and \(C_{\psi, \mathrm{adj}}\) is their covariance.

In certain specific settings, structural orthogonality conditions guarantee that the covariance term \(C_{\psi, \mathrm{adj}}\) is exactly zero. This leads to a simpler, additive decomposition (\(V_\phi = V_\psi + V_{\mathrm{adj}}\)), implying that evaluating the estimand using estimated weights increases the asymptotic variance (\(V_\phi \ge V_\psi\)).

We present a generalized corollary that isolates the exact mathematical conditions required for this simplification to hold, formalizing the findings of Theorem 5.1 in \citet{crump2006moving}. Let \(Z_i = (D_i, X_i)\).
\begin{corollary}
  \label{cor:crump}
  Assume that the weight function is defined as \(w(Z_i) = \lambda(\eta(W_i))\), where \(\lambda(\cdot)\) is a known continuously differentiable mapping, \(W_i\) is a subvector of \(Z_i\), and \(\eta(w) = \E[H(O_i) \mid W_i=w]\) for some measurable function \(H(O_i) \in L_2(P)\). Furthermore, assume the following two orthogonality conditions hold:
  \begin{enumerate}
    \item \(\E[(Y_i - \mu(Z_i))(H(O_i) - \eta(W_i)) \mid Z_i] = 0\).
    \item \(\E\big[w(Z_i)(\mu'(Z_i) - \tau_w) \, \E[\lambda'(\eta(W_i))(\mu'(Z_i) - \tau_w) \mid W_i] \, (H(O_i) - \eta(W_i))\big] = 0\).
  \end{enumerate}
  Under these conditions, the covariance between the fixed-weight influence function \(\psi(O_i)\) and the pathwise adjustment term \(\phi_{\mathrm{adj}}(O_i)\) is identically zero (\(C_{\psi, \mathrm{adj}} = 0\)). Assuming the unnormalized weights are scaled by \(K = \E[w(Z_i)]\), the efficiency bound decomposes as \(V_\phi = V_\psi + V_{\mathrm{adj}}\), where
  \[
    V_{\mathrm{adj}} = \frac{1}{K^2} \E \left[ \left( \E[\lambda'(\eta(W_i)) (\mu'(Z_i) - \tau_w) \mid W_i] \right)^2 \Var(H(O_i) \mid W_i) \right].
  \]
\end{corollary}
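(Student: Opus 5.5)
We derive the efficient influence function of \(\tau_w = \E[w(Z_i)\mu'(Z_i)]/K\), with \(w = \lambda(\eta(W_i))\) and \(K = \E[w(Z_i)]\), by pathwise differentiation along a regular submodel with score \(S(O)\). We then show that the piece generated by estimating \(\eta\) is uncorrelated with the fixed-weight influence function \(\psi\) of Equation \eqref{eq:inf_func}.

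\textbf{Step 1: pathwise derivative of \(\tau_w\).} Write \(\tau_w = N/K\) with \(N = \E[w\mu']\), so that \(\dot{\tau}_w = K^{-1}(\dot N - \tau_w \dot K)\). Differentiating \(N\) produces three contributions.
\begin{itemize}
\item The derivative with respect to the law of \(Z\) gives \(\E[(w\mu' - N)S]\).
\item The derivative through \(\mu\) gives \(\E[\alpha_w(Y-\mu)S]\), using the integration-by-parts representer of Equation \eqref{eq:rr} under Assumption \ref{ass:powell}.
\item The derivative through \(\eta\) gives \(\E[\lambda'(\eta)\dot\eta\,\mu']\). Here \(\dot\eta(W) = \E[(H-\eta)S \mid W]\), so by iterated expectations this term equals \(\E\big[\E[\lambda'(\eta)\mu' \mid W](H-\eta)S\big]\).
\end{itemize}
Doing the same for \(K\) gives \(\E[(w-K)S] + \E\big[\E[\lambda'(\eta)\mid W](H-\eta)S\big]\). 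Combining the two, the influence function is
\[
\phi = \psi + \phi_{\mathrm{adj}}, \qquad \phi_{\mathrm{adj}}(O) = \frac{1}{K}\,\E[\lambda'(\eta(W))(\mu'(Z) - \tau_w) \mid W]\,(H(O) - \eta(W)),
\]
where \(\psi = K^{-1}[\alpha_w(Y-\mu) + w(\mu'-\tau_w)]\) is the normalized fixed-weight influence function.

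\textbf{Step 2: the covariance vanishes.} Write \(g(W) = \E[\lambda'(\eta)(\mu'-\tau_w) \mid W]\). Then
\[
K^2 C_{\psi,\mathrm{adj}} = \E\big[\alpha_w(Y-\mu)\,g(W)(H-\eta)\big] + \E\big[w(\mu'-\tau_w)\,g(W)(H-\eta)\big].
\]
For the first term, condition on \(Z\). Since \(W\) is a subvector of \(Z\), both \(\alpha_w\) and \(g(W)\) are \(Z\)-measurable. The term therefore equals \(\E\big[\alpha_w\,g\,\E[(Y-\mu)(H-\eta) \mid Z]\big]\), which is zero by orthogonality condition 1. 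The second term is exactly the expression that condition 2 sets to zero. Hence \(C_{\psi,\mathrm{adj}} = 0\).

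\textbf{Step 3: the variance formula.} With the cross term gone, \(V_\phi = V_\psi + \E[\phi_{\mathrm{adj}}^2]\). Since \(g(W)\) is \(W\)-measurable, conditioning on \(W\) gives
\[
\E[\phi_{\mathrm{adj}}^2] = K^{-2}\,\E\big[g(W)^2\,\E[(H-\eta)^2 \mid W]\big] = K^{-2}\,\E\big[g(W)^2 \Var(H \mid W)\big],
\]
which is the stated \(V_{\mathrm{adj}}\).

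The main obstacle is Step 1: correctly deriving the \(\eta\)-adjustment term, and in particular the centering at \(\tau_w\) that comes from differentiating the normalization \(K\). Two points need checking there: that the chain rule through \(\lambda\) is licensed (continuous differentiability and \(H \in L_2\)), and that the representer \(\alpha_w\) exists so that \(\psi\) is square-integrable. We will simply assume both, as the statement implicitly does. Once the form of \(\phi_{\mathrm{adj}}\) is pinned down, Steps 2 and 3 are direct applications of iterated expectations.
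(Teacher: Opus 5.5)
Your proposal is correct and follows the paper's route: the covariance splits into an \(\alpha_w(Y-\mu)\) term, which vanishes by condition 1 after conditioning on \(Z_i\), and a \(w(\mu'-\tau_w)\) term, which is exactly the expression condition 2 sets to zero; \(V_{\mathrm{adj}}\) then follows by conditioning on \(W_i\). The only difference is that you actually derive \(\phi_{\mathrm{adj}}\), including the \(\tau_w\)-centering that comes from differentiating \(K\), whereas the paper simply asserts it as the output of ``standard pathwise differentiation''; your derivation is correct.
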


Corollary \ref{cor:crump} explains why the simple additive variance penalty conventionally holds in the binary treatment setting. In the binary case, target populations depend purely on the propensity score \(\eta(X_i) = \mathbb{P}(D_i=1 \mid X_i)\). Therefore, \(W_i = X_i\) and \(H(O_i) = D_i\). Condition 1 is trivially satisfied because \((D_i - e(X_i))\) is deterministic given \(Z_i\), yielding \(\E[Y_i - \mu(Z_i) \mid Z_i] = 0\). Condition 2 is satisfied because, by the definition of \(e(X_i)\), \(\E[D_i - e(X_i) \mid X_i] = 0\).

This corollary also explains why optimal continuous treatments under general heteroskedasticity preclude this simple additive decomposition. The continuous optimal balancing weights rely on estimating the precision weight \(\omega(Z_i)\), which sets \(W_i = Z_i\) and \(H(O_i) = (Y_i - \mu(Z_i))^2\). In this scenario, Condition 1 evaluates the third conditional moment of the outcome residual
\[
  \E \left[(Y_i - \mu(Z_i))\left((Y_i - \mu(Z_i))^2 - \sigma^2(Z_i)\right) \mid Z_i\right] = \E[(Y_i - \mu(Z_i))^3 \mid Z_i].
\]
Unless the conditional error distribution is symmetric, this conditional covariance is non-zero, violating Condition 1. Furthermore, because the continuous estimand \(\mu'(Z_i)\) depends on \(D_i\), Condition 2 is violated when estimating the continuous propensity score \(e_\omega(X_i)\).

\begin{proof}[Proof of Corollary \ref{cor:crump}]
  Accounting for the normalization constant \(K = \E[w(Z_i)]\), the true influence function assuming fixed weights is
  \[
    \psi(O_i) = \frac{1}{K} \alpha_w(Z_i)(Y_i - \mu(Z_i)) + \frac{w(Z_i)}{K} (\mu'(Z_i) - \tau_w).
  \]

  Under the assumption that the weights depend smoothly on \(\eta(W_i) = \E[H(O_i) \mid W_i]\), standard pathwise differentiation yields the adjustment term
  \[
    \phi_{\mathrm{adj}}(O_i) = \frac{1}{K} \E[\lambda'(\eta(W_i)) (\mu'(Z_i) - \tau_w) \mid W_i] (H(O_i) - \eta(W_i)).
  \]
  Let \(C(W_i) = \frac{1}{K} \E[\lambda'(\eta(W_i)) (\mu'(Z_i) - \tau_w) \mid W_i]\). The covariance \(C_{\psi, \mathrm{adj}} = \E[\psi(O_i)\phi_{\mathrm{adj}}(O_i)]\) expands to
  \begin{align*}
    \E[\psi(O_i)\phi_{\mathrm{adj}}(O_i)] & = \E \left[ \frac{1}{K} \alpha_w(Z_i)(Y_i - \mu(Z_i)) C(W_i)(H(O_i) - \eta(W_i)) \right]    \\
                                          & \quad + \E \left[ \frac{w(Z_i)}{K} (\mu'(Z_i) - \tau_w) C(W_i)(H(O_i) - \eta(W_i)) \right].
  \end{align*}
  By the Law of Iterated Expectations conditioning on \(Z_i\), the first term contains \(\E[(Y_i - \mu(Z_i))(H(O_i) - \eta(W_i)) \mid Z_i]\), which vanishes under Condition 1. The second term vanishes under Condition 2. Consequently, \(C_{\psi, \mathrm{adj}} = 0\). The variance \(V_{\mathrm{adj}} = \E[\phi_{\mathrm{adj}}^2(O_i)]\) follows directly from \(\E[(H(O_i) - \eta(W_i))^2 \mid W_i] = \Var(H(O_i) \mid W_i)\).
\end{proof}

\newpage
\section{Appendix: Replication of HI04 \label{sec:appendc}}
\renewcommand{\thetable}{C\arabic{table}}
\setcounter{table}{0}
\renewcommand{\thefigure}{C\arabic{figure}}
\setcounter{figure}{0}

\begin{table}[h]
\centering
\caption{Summary Statistics for Lottery Winners Sample}
\label{tab:summary}
\begin{tabular}{@{}lcc@{}}
\toprule
Variable & Mean & SD \\
\midrule
\multicolumn{3}{l}{\textit{Outcome and Treatment}} \\
Earnings (Y) & 10.318 & 13.163 \\
Annualized Prize (D) & 55.196 & 61.803 \\
\addlinespace
\multicolumn{3}{l}{\textit{Demographics}} \\
Age at Winning & 46.945 & 13.797 \\
Years High School & 3.603 & 1.071 \\
Years College & 1.367 & 1.601 \\
Male & 0.578 & 0.495 \\
Tickets Bought & 4.570 & 3.282 \\
Working at Winning & 0.802 & 0.400 \\
Year Won & 1986.059 & 1.294 \\
\addlinespace
\multicolumn{3}{l}{\textit{Pre-Treatment Earnings}} \\
Earnings year (-6) & 11.965 & 11.790 \\
Earnings year (-5) & 12.115 & 11.992 \\
Earnings year (-4) & 12.037 & 12.081 \\
Earnings year (-3) & 12.820 & 12.654 \\
Earnings year (-2) & 13.479 & 12.965 \\
Earnings year (-1) & 14.468 & 13.624 \\
\midrule
Observations & \multicolumn{2}{c}{237} \\
\bottomrule
\end{tabular}
\end{table}

\begin{figure}[t]
\centering
\includegraphics[page=1, width=1\textwidth]{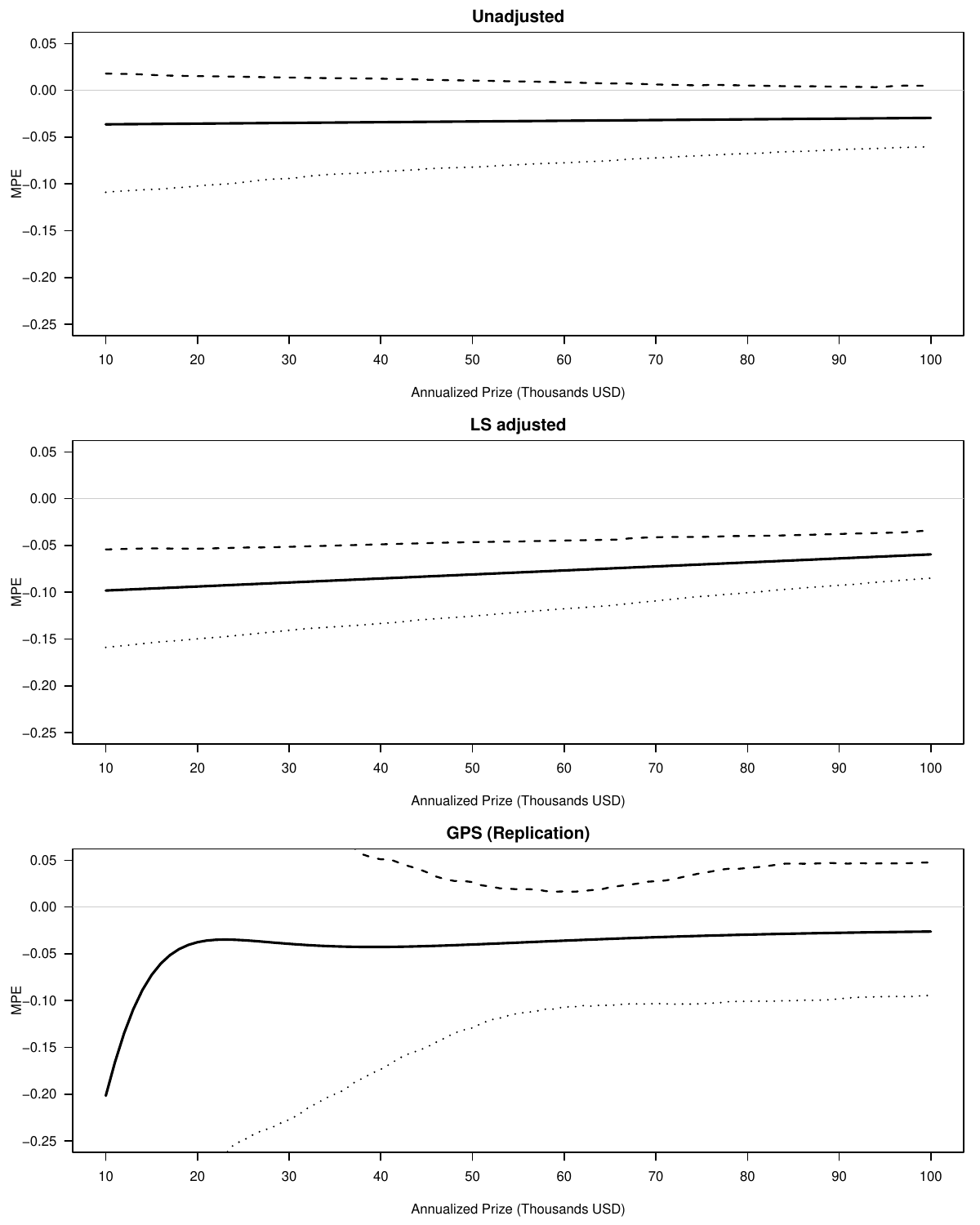}
\caption{Replication of Figure 7.1 in HI04}
\label{fig:mpe_results}
\end{figure}

\end{document}